\documentclass[12pt]{article}
\usepackage{listings}
\usepackage{color}

\definecolor{dkgreen}{rgb}{0,0.6,0}
\definecolor{gray}{rgb}{0.5,0.5,0.5}
\definecolor{mauve}{rgb}{0.58,0,0.82}

\lstset{frame=tb,
	language=Python,
	aboveskip=3mm,
	belowskip=3mm,
	showstringspaces=false,
	columns=flexible,
	basicstyle={\small\ttfamily},
	numbers=none,
	numberstyle=\tiny\color{gray},
	keywordstyle=\color{blue},
	commentstyle=\color{dkgreen},
	stringstyle=\color{mauve},
	breaklines=true,
	breakatwhitespace=true,
	tabsize=3
}
\usepackage{cite}
\usepackage{amsmath,amssymb,amsfonts,mathtools}
\usepackage{algorithmic}
\usepackage{graphicx}
\usepackage{textcomp}
\usepackage{xcolor}
\usepackage{amsthm}
\usepackage{mathabx}
\usepackage{hyperref}
\usepackage{caption}
\usepackage{tablefootnote}
\usepackage{array}
\usepackage{tikz} 

\newcommand{\circled}[1]{%
	\tikz[baseline=(char.base)]{
		\node[shape=circle,draw,inner sep=2pt] (char) {#1};}}

\usepackage{pgf,tikz,pgfplots}
\usepackage{mathrsfs}
\usetikzlibrary{arrows}

\usepackage{xpatch}
\makeatletter
\AtBeginDocument{\xpatchcmd{\@thm}{\thm@headpunct{.}}{\thm@headpunct{}}{}{}}
\makeatother
\usepackage{comment}

\DeclareCaptionType{equ}[][]
\setcounter{MaxMatrixCols}{20}
\setlength{\parskip}{3mm}
\topmargin=-30pt
\textheight=648pt
\oddsidemargin=0pt
\textwidth=468pt

\pagestyle{plain}
\newtheorem{defn}{Definition}
\newtheorem{theorem}{Theorem}[section]
\newtheorem{lemma}[theorem]{Lemma}
\newtheorem{prop}[theorem]{Proposition}
\newtheorem{cor}[theorem]{Corollary}
\newtheorem{example}[theorem]{Example}

\newtheorem{remark}[theorem]{Remark}

\newcommand{\ben}{\begin{equation*}}
	\newcommand{\een}{\end{equation*}}

\newcommand{\F}{\mathbb{F}}

\raggedbottom

\newcommand{\Spa}{\operatorname{Span}}

\newcommand{\Hull}{\operatorname{Hull}}
\newcommand{\rank}{\operatorname{rank}}
\newcommand{\ev}{\operatorname{ev}}

\begin{document}
	\title{Hull's Parameters of Projective Reed-Muller Codes}
	\author{Yufeng Song \\ School of Mathematics and Statistics\\ Central China Normal University, China \\
		{\tt we72028@gmail.com} \\ \\
		Jinquan Luo \\ School of Mathematics and Statistics\\ Hubei Key Laboratory of Mathematical Sciences \\ Central China Normal University, China \\
		{\tt luojinquan@mail.ccnu.edu.cn} \\
	}
	\date{\today}
	\maketitle
	
	\begin{abstract}
		Projective Reed-Muller codes(PRM codes) are constructed from the family of projective hypersurfaces of a fixed degree over a finite field $\F_q$.  In this paper, we completely determine the minimal distance of the hull of any Projective Reed-Muller codes. Motivated by Nathan Kaplan and Jon-Lark Kim \cite{kaplankim},we extend their results and calculate the hulls' dimension of Projective Reed-Muller Codes in a larger range. We also analyse two special classes of PRM codes apart from self-dual,self-orthgonal and LCD cases, which Kaplan and Kim \cite[section 3]{kaplankim} didn't consider.
	\end{abstract}
	
	{\bf{Keywords}}: Projective Reed-Muller codes; Minimal distance; Evaluation codes; Hull of a linear code.
	
	{\bf{MSC Classification}}: 11T71; 94B60; 11T06.
	
	\section{Introduction}
	The Generalized Reed-Muller codes (GRM codes) were first introduced by Kasami, Lin, and Peterson \cite{kasamiLinPeterson}, as well as by Weldon \cite{Weldon}, in 1968. Their research demonstrated that GRM codes are cyclic, allowing for the determination of their minimum distance. Notably, our paper establishes the minimal distance for all projective Reed-Muller codes. Following this, Kasami, Lin, and Peterson \cite{kasamiLinPeterson2} introduced polynomial codes, which extended GRM codes and other code types through a multivariable approach. An in-depth examination of the relationship between multivariable and single-variable approaches to GRM codes was conducted by Delsarte, Goethals, and MacWilliams \cite{DelsarteMGoethalsMacWilliams}. Although these foundational studies were published around 1970, GRM codes continue to attract significant interest in recent research. For instance, \cite{SunDingWang} explored a class of constacyclic codes motivated by punctured generalized Reed-Muller codes.
	
	This paper focuses on projective Reed-Muller codes, which are the projective counterparts of GRM codes. Manin and Vlăduț were among the first to address the natural projective extension of the multivariable GRM construction in their work on algebraic geometric codes \cite{VladutManin}, leading to the development of projective Reed-Muller codes (PRM codes). They derived bounds on the minimum distance of PRM codes over $\mathbb{F}_q$ for order $v$, where $v < q$ and $q$ is a prime power. Lachaud \cite{Lachaud1} formally introduced the term "projective Reed-Muller codes" and investigated PRM codes of orders 1 and 2, further deriving parameters for the case when $v < q$ in \cite{Lachaud2}. Sørensen \cite{Sor} subsequently extended these results by determining parameters for all cases, characterizing the dual of a projective Reed-Muller code, and examining the conditions under which these codes are cyclic. Berger \cite{Ber} studied the automorphism groups associated with these codes. For information on generalized Hamming weights and Hamming weight enumerators for these codes in specific cases, refer to \cite{BeeDatGho2, BeeDatGho, Elk, Kap1}. 
	
	Recently, Ruano and San-José calculated the dimensions of the hulls of projective Reed-Muller codes over the projective plane \cite{RSJ1} and investigated the intersections of pairs of these codes more generally. They also applied their findings to the construction of quantum codes with favorable parameters \cite{RSJ1} and further explored quantum code construction from projective Reed-Muller codes in \cite{RSJ2}. Kaplan and Kim \cite{kaplankim} determined the necessary and sufficient conditions for a PRM code to be self-dual, self-orthogonal, and LCD, and calculated the hull's dimension of a PRM code $PRM(q,m,v)$ for $1 \leq v < q-1$. In our paper, we determine the minimal distance of $PRM(q,m,v)$ for arbitrary values of $q$, $m$, $v$ and calculate the dimension of hull of $PRM(q,m,v)$ in a larger range of $v$ when $q-1 < v < \frac{3(q-1)}{2}$ and $\frac{3(q-1)}{2} < v < 2(q-1)$, $(m-1)(q-1)-\frac{q-1}{2} < v < (m-1)(q-1)$and$(m-2)(q-1) < v < (m-1)(q-1)-\frac{q-1}{2}$. During our study of these codes, we discovered an interesting symmetry in the PRM code as the order $v$ increases from 1 to $\frac{m(q-1)}{2}$.
	
	Combining our result about minimal distance with Kaplan and Kim's result about hull's dimension in \cite[Section 4]{kaplankim}, all hulls $[n_{hull},k_{hull},d_{hull}]$ of PRM codes $PRM(q,m,v)$ with $\mathbb{P}^m(\F_q)$ of dimension 1 and 2  are completely determined. Combining our result about minimal distance with the results in section \ref{hulldim-q-1not-divide} about hull's dimension, all hulls $[n_{hull},k_{hull},d_{hull}]$ of PRM codes $PRM(q,m,v)$ with $\mathbb{P}^m(\F_q)$ of dimension 3 and 4 are completely determined. For $m \ge 5$, we also completely determine the minimal distance $d_{hull}$ and partly determine $k_{hull}$ in a large range. In a word, our results focus not only on projective Reed-Muller codes over the projective line and the projective plane but also take those codes in higher-dimensional projective spaces into consideration. Nonetheless, our paper does not involve hull variation problems or their potential connection with quantum codes, leaving these as a possible direction for future research.
	
	The main aim of this paper is to analyze the hull's parameters of projective Reed-Muller codes. The \emph{hull of a linear code} $C$ is defined as $\Hull(C) = C \cap C^\perp$. There are three significant cases to consider: when $C$ is \emph{self-dual}, meaning $C = C^\perp = \Hull(C)$; when $C$ is \emph{self-orthogonal}, indicating $C = \Hull(C)$; and when $C$ is \emph{LCD (linear code with complementary dual)}, which implies that $\Hull(C)$ is trivial. This concept has been widely studied in the other research \cite{Euclidean and Hermitian LCD MDS codes, New constructions of MDS, combinatorics of LCD codes, Esmaeili, 10, 13, 16}. In addition to the extensive research concerning self-dual, self-orthogonal, and LCD codes (see, for example, \cite{Kim1, Roe1, Massey1, Massey2}), numerous recent research have studied hulls of various code families, often influenced by their connections to quantum error-correcting codes. It is worth to mention that recent work by Gao, Yue, Huang, and Zhang \cite{GaoYueHuaZha}, along with that by Chen, Ling, and Liu \cite{CheLinLiu}, has concentrated on hulls of generalized Reed-Solomon codes.

	The key contributions of our paper are outlined as follows. Analysing two classes of Projective Reed-Muller Code as a complement of Kaplan and Kim's analysis about the sufficient and necessary condition for a Projective Reed-Muller Code to be self-dual,self-orthogonal, and LCD. One case of our analysis is the sufficient and necessary condition for a Projective Reed-Muller Code to be dual-containing. The other case is $v \ge \frac{q-1}{2},v \equiv 0 \pmod{q-1}$ but it isn't any type of self-dual,self-orthgonal,LCD or dual-containing. Combining our results and Kaplan and Kim's results, we get a more detailed characterization of special Projective Reed-Muller Code with special proporsition, such as self-dual,self-orthgonal,LCD or dual-containing. For a Projective Reed-Muller Code determined by three parameters $q,m,v$, Kaplan and Kim compute the dimension of $\operatorname{Hull}(PRM(q,m,v))$ when $1 < v < \frac{q-1}{2}$ , $\frac{q-1}{2} < v < q-1$,$m(q-1)-\frac{q-1}{2} < v < m(q-1)$,$(m-1)(q-1) < v < m(q-1)-\frac{q-1}{2}$. We compute the hull dimension of $PRM(q,m,v)$ when $q-1 < v < \frac{3(q-1)}{2}$ and $\frac{3(q-1)}{2} < v < 2(q-1)$, $(m-1)(q-1)-\frac{q-1}{2} < v < (m-1)(q-1)$and$(m-2)(q-1) < v < (m-1)(q-1)-\frac{q-1}{2}$for a larger range of parameters. In the end, we also completely determine the minimal distance for a Projective Reed-Muller Code $PRM(q,m,v)$ determined by three arbitrary parameters $q,m,v$.Throughout the paper, we provide examples to demonstrate our results.
	
	Our paper is arranged as follows. In the section \ref{pre},we give some basic definitions and proporsitions about projective Reed-Muller codes, the hull of a linear code, and duals of projective Reed-Muller codes. In the section \ref{dual-containing},we determine the the sufficient and necessary condition for a Projective Reed-Muller Code to be dual-containing and point out another special class of Projective Reed-Muller Codes. In section \ref{hulldim-q-1not-divide},we determine the hull dimension of $PRM(q,m,v)$ when $q-1 < v < \frac{3(q-1)}{2}$ and $\frac{3(q-1)}{2} < v < 2(q-1)$, $(m-1)(q-1)-\frac{q-1}{2} < v < (m-1)(q-1)$and$(m-2)(q-1) < v < (m-1)(q-1)-\frac{q-1}{2}$ as the answer for the question proposed by Kaplan and Kim \cite{kaplankim}. In section \ref{minihulldistance}, we determine the minimal distance for any Projective Reed-Muller Code determined by three parameters $q,m,v$. We also give the dimension of the special case in remark \ref{special-case}.

	\section{Preliminaries}\label{pre}
	For a general reference on coding theory we recommend \cite{Pless1, Macwilliams, Pless2}.
	
	\subsection{Projective Reed-Muller Codes}
	First, we introduce the {\em evaluation code}. $\F_q$ is the finite fields with q elements. Let $\F_q^m$ be an affine space over $\F_q$ of dimension m and there are exactly $q^m$ points in it. Hence we can denote $\F_q^m$ by $\F_q^m=\{P_1, \dots , P_{q^m} \}$. We can choose a subset of all these points and denote it by $\mathcal P=\{P_1, \dots , P_n \}$ where $n \le q^m$. Let $\F_q[X_1, \dots, X_m]$ be a polynomials ring consisiting of all polynomials in m variables $X_0,X_1,\dots,X_m$ and let $\mathcal V$ be a subspace of $\F_q[X_1, \dots, X_m]$ with finite dimension. We define the {\em evaluation map} ${\mbox{ev}}_{\mathcal P}$ as a map from $\F_q[X_1, \dots, X_m]$ to $\F_q^n$. For any $F(X) \in \F_q[X_1, \dots, X_m]$, ${\mbox{ev}}_{\mathcal P}(F(X))=(F(P_1), \dots, F(P_n))$. The {\em evaluation code} is the image of $\mathcal{V}$ under the evaluation map.
	
	For convenience, we use the following notations:
	\begin{tabbing}
		$\F_q$ \hspace{40mm} \= Finite fields with q elements, where q is a prime power,$\F_q^{*}=\F_q-\{0\}$.\\
		$\F_q[X_1, \dots, X_m]$ \> A polynomials ring consisiting of all polynomials in m variables  \\
		\>$X_1,\dots,X_m$ with coefficients in $\F_q$.\\
		$\F_q[X_1, \dots, X_m]^v \cup \{0\}$ \> Vectorspace of polynomials in n variables $X_1,\dots,X_m$ with degree $v$.\\
		$\F_q[X_1, \dots, X_m]_h \cup \{0\}$ \> Vectorspace of homogeneous polynomials in m variables.\\
		$\F_q[X_1, \dots, X_m]^v_h \cup \{0\}$ \> Vectorspace of homogeneous polynomials in m variables $X_1,\dots,X_m$. \\
		\> with degree $v$.\\
		$M_{\F_q}[X_1, \dots, X_m]^v$ \> Set of all monomials in n variables $X_1,\dots,X_m$ of degree $v$.\\
		$\mathbb{A}^m$ \> m-dimension affine space over $\F_q$ (=$\F_q^m$).\\
		$\mathbb{P}^m$ \> m-dimension projective space over $\F_q$. \\
		${\pi}_m$\>$\pi_m=q^m+\dots+q+1=\frac{q^{m+1}-1}{q-1}$ is the number of points in $\mathbb{P}^m$.\\
		$\mathcal{P}_m=\{ P_1, \dots ,P_{q^m}\}$ \> The set of all points of $\mathcal{A}^m$. \\
		$\mathcal{P}'_m=\{ P_1', \dots ,P_{{\pi}_m}'\}$ \> The set of all standard representives of the elements in $\mathcal{A}^{m+1}$, \\
		\>i.e the projective points with the form $(0:\dots:0:1:a_j:\dots:a_m)$ ,\\
		\>whose first nonzero element is 1 and all zeros left to the first nonzero \\
		\>element.$\mathcal{P'}_m$ is a subset of $\mathcal{P}_{m+1}$.\\
	\end{tabbing}

With the above notations, we can define the {\em generilized Reed-muller Code} (GRM) and {\em projective Reed-muller Code} (PRM) in a easy way.
	
	\begin{defn}
		A generilized Reed-Muller code with three parameters $q,m,v$ is $GRM(q,m,v)={\mbox{ev}}_{\mathcal P_m}(\F_q[X_1, \dots, X_m]^v \cup \{0\})$, i.e the image of $\F_q[X_1, \dots, X_m]^v \cup \{0\}$ under the evaluation map $\mbox{ev}_{\mathcal{P}_m}$.
	\end{defn}

	\begin{defn}
		Let $\mathbb P^m (\F_q)$ be a projective space of dimension n and $\mathcal P'_m =\{ P'_1, \dots , P'_{{\pi}_m} \}$ be the set of all projective points in $\mathbb P^m (\F_q)$ with standard form, where $\pi_m=q^m+\dots+q+1=\frac{q^{m+1}-1}{q-1}$, the projective Reed-Muller Code with parameters $q,m,v$ is $PRM(q,m,v)={\mbox{ev}}_{\mathcal P'_m}(\F_q[X_0, \dots, X_m]^v_h \cup \{0\})$ ,i.e the image of $\F_q[X_0,X_1, \dots, X_m]^v_h \cup \{0\}$ under the evaluation map $\mbox{ev}_{\mathcal{P'}_m}$.
	\end{defn}
	
 The projective Reed-Muller codes that arise from different orderings of the points of $\mathcal{P}$ and different choices of affine representatives in $\mathcal{P}'$ are monomially equivalent. For convenience, we follow the convention of Lachaud \cite{Lac1, Lac2} and S\o{}rensen \cite{Sor}, where for each projective point we choose the affine representative for which the left-most nonzero coordinate is equal to $1$ in this paper.  Throughout the rest of this paper, whenever we write $PRM(q,m,v)$ we will use this choice of affine representatives.
	
	An easy example is when $v=1$ the projective Reed-Muller code $PRM(q,m,1)$ is a $q$-ary simplex code.  This case is discussed by Lachaud  together with $v=2$ in his initial paper on this topic \cite{Lac1}.Lachaud determined the parameters of the code $PRM(q,m,v)$ in the case where $1 \le v < q$.
	\begin{lemma}{\rm (\cite[Theorem 2]{Lac2})} \label{lem-parameters}
		Assume that $1 \le v < q$. Then the code $PRM(q,m,v)$ has parameters
		{\rm length} $n=\frac{q^{m+1}-1}{q-1}$, {\rm dimension} $k=\binom{m+v}{v}$,
		{\rm distance} $d=(q-v+1)q^{m-1}$.
	\end{lemma}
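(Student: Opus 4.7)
The length is immediate: it equals $|\PP^m(\F_q)| = \pi_m$ by the definition of $\mathcal{P}'_m$.  For the dimension, I would first note that $\F_q[X_0, \dots, X_m]^v_h \cup \{0\}$ has $\F_q$-dimension $\binom{m+v}{v}$, being spanned by the monomials of degree exactly $v$ in the $m+1$ variables, so the claim reduces to injectivity of $\ev_{\mathcal{P}'_m}$ on this space when $v < q$.  Given a homogeneous $F$ of degree $v$ in the kernel, I would write $F = \sum_{i=0}^v X_0^{v-i} Q_i(X_1, \dots, X_m)$ with $Q_i$ homogeneous of degree $i$, and evaluate on the affine standard representatives $(1:a_1:\dots:a_m)$ to conclude that $\widetilde F := \sum_i Q_i$ vanishes on all of $\F_q^m$.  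Since $\deg \widetilde F \leq v < q$, no reduction modulo $(X_1^q - X_1, \dots, X_m^q - X_m)$ is needed, so the canonical evaluation isomorphism between this quotient and the $\F_q$-algebra of functions $\F_q^m \to \F_q$ forces $\widetilde F = 0$ identically.  Because the $Q_i$ occupy distinct graded components of $\F_q[X_1, \dots, X_m]$, each $Q_i = 0$, and therefore $F = 0$.

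For the minimum distance I would first establish the upper bound by exhibiting an explicit codeword: take $F = \prod_{i=1}^v (X_1 - \alpha_i X_0)$ with $\alpha_1, \dots, \alpha_v$ any $v$ distinct elements of $\F_q$, which exist since $v < q$.  A direct count shows the zero set of this $F$ in $\PP^m(\F_q)$ consists of the $v\, q^{m-1}$ affine representatives with $a_1 \in \{\alpha_1, \dots, \alpha_v\}$ together with the $\pi_{m-2}$ standard representatives starting $(0:0:\ldots)$ (where $F$ collapses to $X_1^v$), so the weight equals $\pi_m - v\, q^{m-1} - \pi_{m-2} = (q-v+1)\, q^{m-1}$.

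The hardest step will be the matching lower bound, namely that every nonzero $F \in \F_q[X_0, \dots, X_m]^v_h$ has at most $v\, q^{m-1} + \pi_{m-2}$ zeros in $\PP^m(\F_q)$.  I would proceed by induction on $m$; the base case $m = 1$ is the standard fact that a nonzero binary form of degree $v$ has at most $v$ projective zeros.  For the inductive step I would slice along the hyperplane $X_0 = 0$: if $X_0 \mid F$, factor $F = X_0 G$, absorb the entire hyperplane of zeros, and apply induction to $G$ of degree $v - 1$; otherwise $F|_{X_0 = 0}$ is a nonzero form of degree $v$ on $\PP^{m-1}$ to which the inductive hypothesis applies, while the affine zeros are bounded by $v\, q^{m-1}$ via a Schwartz--Zippel estimate on the dehomogenization $F(1, X_1, \dots, X_m)$.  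The delicate point is reconciling these cases to obtain the sharp uniform bound $v\, q^{m-1} + \pi_{m-2}$; this is essentially Serre's inequality for hypersurfaces in the range $v < q$ and constitutes the core of Lachaud's argument.
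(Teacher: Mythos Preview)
The paper does not supply its own proof of this lemma: it is stated as a citation of Lachaud's result \cite[Theorem~2]{Lac2} and used as a black box, so there is nothing in the paper to compare your argument against.  Your proposal is a correct and standard reconstruction of Lachaud's proof---the injectivity argument for the dimension is clean, the explicit codeword giving weight $(q-v+1)q^{m-1}$ is the usual one, and you are right to identify the matching lower bound (Serre's bound on the number of $\F_q$-rational points of a degree-$v$ hypersurface when $v<q$) as the substantive step, which you correctly defer to Lachaud/Serre rather than reproving in full.
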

	
	If $v \ge m(q-1) +1$, then $PRM(q,m,v)$ is trivial, that is, the whole space $\mathbb F_q^n$~\cite[Remark 3]{Sor}.
	For $1 \le v \le m(q-1)$, S\o{}rensen computed the parameters of $PRM(q,m,v)$.  For a recent discussion of S\o{}rensen's computation of the minimum distance and also for a characterization of the minimal weight codewords of $PRM(q,m,v)$, see the paper of Ghorpade and Ludhani \cite{GhoLud}, and also the short note of S\o{}rensen \cite{Sor2}.
	
	\begin{lemma}{\rm(\cite[Theorem 1]{Sor})} \label{lem-parameters-2}
		Assume that $1 \le v \le m(q-1)$. Then the code $PRM(q,m,v)$ has parameters
		\begin{eqnarray*}
			{\mbox{{\rm length} }} n & = &\frac{q^{m+1}-1}{q-1}, \\
			{\mbox{{\rm dimension} }} k &= &\sum_{\substack{t \equiv v\hspace{-.22cm} \pmod{q-1} \\ 0 < t \le v}} \left(
			\sum_{j=0}^{m+1} (-1)^j \binom{m+1}{j} \binom{t-jq+m}{t-jq} \right), \\
			{\mbox{{\rm distance} }} d & = & (q-s)q^{m-r-1},
		\end{eqnarray*}
		where
		\[
		v-1 = r(q-1) +s,~~ 0 \le s < q-1.
		\]
	\end{lemma}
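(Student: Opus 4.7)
The plan is to handle the three parameters (length, dimension, minimum distance) separately. Length is immediate: $n = |\mathcal{P}'_m| = \pi_m$.

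For the dimension, I would identify the image of $\operatorname{ev}_{\mathcal{P}'_m}$ on $\F_q[X_0,\ldots,X_m]_h^v$ with a subspace of the affine coordinate ring $\F_q[X_0,\ldots,X_m]/(X_0^q - X_0,\ldots,X_m^q - X_m)$. Homogeneity forces a degree-$v$ polynomial to vanish on a standard representative if and only if it vanishes on the entire line through the origin, so the kernel of $\operatorname{ev}_{\mathcal{P}'_m}$ coincides with the degree-$v$ part of the affine vanishing ideal. In the quotient ring, the reduced monomials $\{X^\beta : 0 \le \beta_i \le q-1\}$ form a basis, and via $X_i^q \equiv X_i$ each monomial $X^\alpha$ of degree $v$ reduces to the unique $X^\beta$ having the same support as $\alpha$ and $\beta_i \equiv \alpha_i \pmod{q-1}$ (with $\beta_i \in \{1,\ldots,q-1\}$ whenever $\alpha_i > 0$). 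The total degree $t := |\beta|$ then satisfies $t \equiv v \pmod{q-1}$ and $1 \le t \le v$, and conversely every such reduced $\beta$ is realized by inflating some positive $\beta_i$ by a multiple of $q-1$; $t = 0$ is excluded because $v \ge 1$ forces at least one $\beta_i \ge 1$. Counting reduced monomials of a fixed degree $t$ in $m+1$ variables by inclusion-exclusion on the events $\{\beta_i \ge q\}$ reproduces the inner sum $\sum_{j=0}^{m+1}(-1)^j \binom{m+1}{j}\binom{t-jq+m}{t-jq}$, and summing over admissible $t$ yields the stated dimension.

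For the minimum distance, the upper bound is established by an explicit construction. Writing $v-1 = r(q-1)+s$ and fixing distinct $b_1,\ldots,b_s \in \F_q^{*}$, define
\[
f = X_0 \cdot \prod_{j=1}^{s}(X_0 - b_j X_{r+1}) \cdot \prod_{i=1}^{r}\prod_{a \in \F_q^{*}}(X_0 - a X_i),
\]
a homogeneous polynomial of degree $1 + s + r(q-1) = v$. The leading factor $X_0$ annihilates every point at infinity, and on the affine chart $X_0 = 1$ the identities $\prod_{a \in \F_q^{*}}(1 - a X_i) = 0 \iff X_i \ne 0$ and $\prod_{j=1}^{s}(1 - b_j X_{r+1}) = 0 \iff X_{r+1} \in \{b_j^{-1}\}$ show that $f$ is nonzero precisely on the set $\{X_1 = \cdots = X_r = 0,\ X_{r+1} \notin \{b_j^{-1}\}\}$, of size $(q-s)q^{m-r-1}$.

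For the lower bound I would proceed by induction on $m$, the base case $m = 1$ following from the elementary univariate zero count. For the inductive step, decompose the weight of $f$ on $\mathbb{P}^m(\F_q)$ as the weight on the affine chart $\{X_0 \ne 0\} \cong \F_q^m$ plus the weight on the hyperplane at infinity $\{X_0 = 0\} \cong \mathbb{P}^{m-1}(\F_q)$. Writing $f = \sum_{k=0}^v X_0^k g_{v-k}(X_1,\ldots,X_m)$, the affine part is the evaluation of $f(1,\cdot) = \sum_k g_{v-k}$ on $\F_q^m$, which (if nonzero as a function) has weight at least the classical GRM minimum distance $(q-s)q^{m-r-1}$; the infinity part is the evaluation of the homogeneous polynomial $g_v$ on $\mathbb{P}^{m-1}(\F_q)$, bounded below by induction. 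Combining these, together with a separate analysis of the degenerate case where $f(1,\cdot) \equiv 0$ on $\F_q^m$ (in which case $\sum_k g_{v-k}$ lies in the ideal $(X_i^q - X_i)$, forcing $f$ to factor in a way that admits further reduction), yields the stated lower bound.

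The main obstacle is this degenerate case: when the affine restriction vanishes identically, the generic inductive bound is too weak, and one must analyze the structure of $f$ inside the homogeneous vanishing ideal to extract additional linear factors. Matching this structural analysis with the explicit extremal construction above across all parameter regimes $(r,s)$ is the technical heart of S\o{}rensen's argument and the most delicate step of the proof.
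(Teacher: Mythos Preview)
The paper does not supply its own proof of this lemma: it is stated with the citation \cite[Theorem 1]{Sor} and used as a black box, with a pointer to \cite{GhoLud} and \cite{Sor2} for the minimum-distance argument. So there is no ``paper's proof'' to compare your attempt against.

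That said, your sketch is a faithful outline of S\o{}rensen's original argument. The dimension computation via reduction modulo $(X_i^q-X_i)$ is correct in spirit; the one point you glide over is why the reduced monomials of degrees $t\equiv v\pmod{q-1}$, $0<t\le v$, remain linearly independent when restricted from $\F_q^{m+1}$ to the representative set $\mathcal{P}'_m$. This follows because any $\F_q$-linear combination of such monomials satisfies $g(\lambda x)=\lambda^v g(x)$ for $\lambda\in\F_q^*$ and $g(0)=0$, so vanishing on $\mathcal{P}'_m$ forces vanishing on all of $\F_q^{m+1}$. Your explicit minimum-weight codeword is (up to relabelling) exactly the polynomial the paper later quotes in the proof of Theorem~\ref{hulldistance}. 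For the lower bound you correctly identify the inductive decomposition into the affine chart and the hyperplane at infinity, and you are right that the degenerate case (affine restriction identically zero) is the crux; this is precisely the gap in \cite{Sor} that was later patched in \cite{Sor2} and \cite{GhoLud}, so any complete write-up should follow one of those corrections rather than the original.
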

\begin{theorem}{\rm (\cite[Theorem 2]{Sor})} \label{thm-Sor} Let $v$ be an integer satisfying $1 \le v \le m(q-1)$ and let $\ell =m(q-1)-v$. Then
	\begin{enumerate}
		\item [{(i)}] ${PRM(q,m,v)}^{\perp} = PRM(q,m,\ell)$ for $v \not \equiv 0 \pmod{q-1}$.
		\item [{(ii)}] ${PRM(q,m,v)}^{\perp} = {{\Spa}}_{\F_q}  \{ {\bf 1}, PRM(q,m,\ell) \}$ for $v  \equiv 0 \pmod{q-1}$.
	\end{enumerate}
\end{theorem}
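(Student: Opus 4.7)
The plan splits into two stages: first show that the candidate dual in each case is contained in $PRM(q,m,v)^\perp$, then match dimensions using Lemma~\ref{lem-parameters-2}. The central computational tool is a single character-sum lemma that I would apply in three situations.

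The lemma states: if $h\in\F_q[X_0,\ldots,X_m]_h$ is homogeneous of degree $d$ with $1\leq d<(m+1)(q-1)$ and $(q-1)\mid d$, then $\sum_{P'\in\mathcal{P}'_m}h(P')=0$. To prove it, compute $\sum_{x\in\F_q^{m+1}\setminus\{0\}}h(x)$ in two ways. Parametrising every nonzero point uniquely as $x=\lambda P'$ with $\lambda\in\F_q^*$ and $P'\in\mathcal{P}'_m$ and using homogeneity gives $\bigl(\sum_{\lambda\in\F_q^*}\lambda^d\bigr)\sum_{P'}h(P')=-\sum_{P'}h(P')$, since $(q-1)\mid d$. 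On the other hand, monomial by monomial, $\sum_{x\in\F_q^{m+1}}X^{\mathbf{e}}(x)=\prod_i\sum_{x_i\in\F_q}x_i^{e_i}$ vanishes unless every $e_i\geq 1$ is a multiple of $q-1$, which would force $d\geq (m+1)(q-1)$; together with $h(0)=0$ both evaluations collapse to zero. Applying the lemma to $h=fg$ for $f$ of degree $v$ and $g$ of degree $\ell$ (so $\deg h=m(q-1)$) yields $\langle\ev(f),\ev(g)\rangle=0$, hence $PRM(q,m,\ell)\subseteq PRM(q,m,v)^\perp$ in both (i) and (ii). In case (ii), a second application with $h=f$ of degree $v$ (a positive multiple of $q-1$ bounded by $m(q-1)$) gives $\mathbf{1}\in PRM(q,m,v)^\perp$, and a third application to any $g$ of degree $\ell\geq 1$ combined with $\sum_{P'}1=\pi_m\equiv 1\pmod{p}$ shows $\mathbf{1}\notin PRM(q,m,\ell)$.

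For the dimension match, let $H(m,t)$ denote the alternating binomial sum appearing in Lemma~\ref{lem-parameters-2}, so that $\dim PRM(q,m,v)=\sum_{1\leq t\leq v,\ t\equiv v\pmod{q-1}} H(m,t)$. The theorem then reduces to
\[
\sum_{\substack{1\leq t\leq v\\ t\equiv v\pmod{q-1}}} H(m,t)\;+\;\sum_{\substack{1\leq t\leq\ell\\ t\equiv\ell\pmod{q-1}}} H(m,t)\;=\;\pi_m-\epsilon,
\]
where $\epsilon=1$ if $(q-1)\mid v$ (and $v<m(q-1)$) and $\epsilon=0$ otherwise. This would follow from the symmetry $H(m,t)=H(m,m(q-1)-t)$ for $1\leq t\leq m(q-1)-1$ together with the total $\sum_{t=0}^{m(q-1)}H(m,t)=\pi_m$: in (i) the two residue classes $\pm v\pmod{q-1}$ are distinct and nonzero, and the two sums exhaust them; in (ii) both sums sit inside the single residue class $0\pmod{q-1}$, which also contains the omitted $t=0$ term $H(m,0)=1$, producing the $-1$. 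The degenerate boundary $v=m(q-1)$ (so $\ell=0$ and $PRM(q,m,0)=\Spa\{\mathbf{1}\}$) is handled separately by checking $\dim PRM(q,m,m(q-1))=\pi_m-1$.

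The main obstacle is the dimension identity. The symmetry $H(m,t)\leftrightarrow H(m,m(q-1)-t)$ is not manifest from the alternating binomial formula, and the cleanest route is to reinterpret $H(m,t)$ as the dimension of the degree-$t$ graded piece of the projective coordinate ring $\F_q[X_0,\ldots,X_m]/I(\mathcal{P}'_m)$, where the reflection $t\mapsto m(q-1)-t$ becomes a Gorenstein/Poincar\'e-duality phenomenon. Once that is in hand, the orthogonality step is essentially a one-line character-sum computation, and the edge cases are mechanical verifications.
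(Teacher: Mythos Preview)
The paper does not prove Theorem~\ref{thm-Sor}; it is quoted verbatim from S\o{}rensen \cite{Sor}, so there is no in-paper argument to compare against. Your orthogonality half is correct and essentially how S\o{}rensen argues: the character-sum lemma is clean and immediately gives $PRM(q,m,\ell)\subseteq PRM(q,m,v)^\perp$ and, in case (ii), $\mathbf{1}\in PRM(q,m,v)^\perp\setminus PRM(q,m,\ell)$.

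The dimension-matching step, however, is wrong as written. The quantity $H(m,t)$ counts $(m{+}1)$-tuples $(e_0,\ldots,e_m)$ with $0\le e_i\le q-1$ and $\sum e_i=t$; the involution $e_i\mapsto (q-1)-e_i$ shows the symmetry is $H(m,t)=H\bigl(m,(m{+}1)(q{-}1)-t\bigr)$, \emph{not} $H(m,t)=H(m,m(q{-}1)-t)$. Likewise $\sum_{t=0}^{m(q-1)}H(m,t)=\pi_m$ is false: for $q=3$, $m=1$ one gets $1+2+3=6\neq 4=\pi_1$. With the correct reflection axis, substituting $t=(m{+}1)(q{-}1)-s$ turns the $\ell$-sum into a sum over $t\equiv v\pmod{q-1}$ with $v+(q-1)\le t\le (m{+}1)(q{-}1)-1$, which together with the $v$-sum covers every positive $t\equiv v$ in $[1,(m{+}1)(q{-}1)]$ exactly once (in case (ii) the values $t=0$ and $t=(m{+}1)(q{-}1)$, each contributing $1$, are omitted). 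The identity you actually need is therefore
\[
\sum_{\substack{0\le t\le (m+1)(q-1)\\ t\equiv r\pmod{q-1}}} H(m,t)\;=\;\begin{cases}\pi_m & 1\le r\le q-2,\\ \pi_m+1 & r=0,\end{cases}
\]
which follows from a roots-of-unity filter applied to the generating function $\sum_t H(m,t)x^t=(1+x+\cdots+x^{q-1})^{m+1}$, using that $\sum_{e=0}^{q-1}\omega^{ke}=1$ for any nontrivial $(q{-}1)$-th root of unity $\omega^k$. Your Gorenstein intuition is right, but the socle of $\F_q[X_0,\ldots,X_m]/(X_0^q,\ldots,X_m^q)$ sits in degree $(m{+}1)(q{-}1)$, not $m(q{-}1)$; once you correct the axis, the rest of your outline goes through.
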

\subsection{The Hull of a Linear Code}
Suppose $\vec{x} = (x_1,\ldots, x_n)$ and $\vec{y} = (y_1,\ldots, y_n)$ are elements of $\F_q^n$. Let the Euclidean inner product of $\vec{x}$ and $\vec{y}$ is $\langle x,y \rangle = \sum_{i=1}^n x_i y_i$.  The \emph{dual} of a linear code $C \subseteq \F_q^n$ is
\[
C^\perp = \{y \in \F_q^n \colon \langle y,c\rangle = 0\ \text{ for all } c\in C\}.
\]
The \emph{hull} of a linear code $C$ is defined by $\Hull(C) = C\cap C^\perp$.  A linear code is \emph{self-dual} if $C = C^\perp$ and is \emph{self-orthogonal} if $C \subseteq C^\perp$.A linear code is \emph{dual-containing} if $C \supseteq C^\perp$.  A $k \times n$ matrix $G$ over $\F_q$ is a \emph{generator matrix} for a linear code $C\subset\F_q^n$ of dimension $k$ if the rows of $G$ form a basis for $C$.  We recall a basic fact about the dimension of the hull of a linear code $C$.
\begin{prop}{\rm(\cite[Proposition 3.1]{GueJitGul})} \label{prop-hull}
	Let $C \subseteq \F_q^n$ be a linear code of dimension $k$ and let $G$ be a generator matrix for $C$.  Then $\rank(G G^T) = k - \dim(\Hull(C))$.
	
	In particular, $C$ is self-orthogonal if and only if $GG^T$ is the zero matrix and $C$ is LCD if and only if $GG^T$ is invertible.
\end{prop}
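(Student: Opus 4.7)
The plan is to use the natural isomorphism $\phi : \F_q^k \to C$ given by $\vec{x} \mapsto \vec{x}G$ (viewing $\vec{x}$ as a row vector), which is well-defined and bijective because the rows of $G$ form a basis of $C$. Under this isomorphism, I will show that the preimage $\phi^{-1}(\Hull(C))$ is exactly the left null space of the $k \times k$ matrix $GG^T$; the rank-nullity theorem then gives the stated identity.

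Concretely, I would first rewrite the Euclidean inner product between two codewords in coordinate form: for any $\vec{x}, \vec{y} \in \F_q^k$,
\[
\langle \vec{x}G,\, \vec{y}G\rangle \;=\; \vec{x}\, G G^T\, \vec{y}^{\,T}.
\]
Next, the membership condition $\vec{x}G \in C^\perp$ is equivalent to $\langle \vec{x}G, \vec{y}G\rangle = 0$ for every $\vec{y}\in \F_q^k$ (since $C = \{\vec{y}G : \vec{y}\in\F_q^k\}$), hence, by the displayed formula, to $\vec{x}\,GG^T\,\vec{y}^{\,T} = 0$ for all $\vec{y}$, which in turn is equivalent to $\vec{x}\,GG^T = \vec{0}$. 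Therefore
\[
\phi^{-1}(\Hull(C)) \;=\; \{\vec{x}\in\F_q^k : \vec{x}\, GG^T = \vec{0}\},
\]
which is the left null space of $GG^T$. Since $\phi$ is an isomorphism and the left null space of $GG^T$ has dimension $k - \rank(GG^T)$, we obtain $\dim\Hull(C) = k - \rank(GG^T)$, which is the main claim.

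For the two consequences, I would simply specialize. The code $C$ is self-orthogonal iff $\Hull(C) = C$, i.e.\ $\dim\Hull(C)=k$, which by the formula is equivalent to $\rank(GG^T) = 0$, i.e.\ $GG^T$ is the zero matrix. The code $C$ is LCD iff $\Hull(C) = \{0\}$, i.e.\ $\dim\Hull(C) = 0$, equivalent to $\rank(GG^T) = k$; since $GG^T$ is $k\times k$, this is exactly invertibility of $GG^T$.

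I do not expect any substantive obstacle: the argument is a direct linear-algebra calculation, and the only care needed is being consistent with row/column conventions so that the symmetry of $GG^T$ is used correctly when translating the condition $\vec{x}GG^T = 0$ into ``$\vec{x}$ lies in the null space.'' Once the isomorphism $\phi$ and the identity $\langle \vec{x}G,\vec{y}G\rangle = \vec{x}GG^T\vec{y}^T$ are in place, the rest is immediate from rank-nullity.
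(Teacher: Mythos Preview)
Your argument is correct and is the standard proof of this well-known fact. Note, however, that the paper does not actually prove this proposition: it is quoted without proof from \cite[Proposition 3.1]{GueJitGul}, so there is no ``paper's own proof'' to compare against. Your proof via the isomorphism $\phi:\F_q^k\to C$, the identity $\langle \vec{x}G,\vec{y}G\rangle = \vec{x}GG^T\vec{y}^{\,T}$, and rank-nullity is exactly the routine linear-algebra argument one would expect, and the two specializations (self-orthogonal $\Leftrightarrow GG^T=0$, LCD $\Leftrightarrow GG^T$ invertible) follow immediately.
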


For a Projective Reed-Muller Code determined by three parameters $q,m,v$, Kaplan and Kim compute the dimension of $\operatorname{Hull}(PRM(q,m,v))$ when $1 < v < \frac{q-1}{2}$ and $\frac{q-1}{2} < v < q-1$. By the dual code they got the dimension of $\operatorname{Hull}(PRM(q,m,v))$ when $m(q-1)-\frac{q-1}{2} < v < m(q-1)$,$(m-1)(q-1) < v < m(q-1)-\frac{q-1}{2}$.

\begin{theorem}\label{thm-q_large}\citeonline{kaplankim}
	Suppose that $1 \le v \le \frac{q-1}{2}$. Then
	\[
	\dim(\Hull(PRM(q,m,v)) = \dim (PRM(q,m,v))-1 =\binom{m+v}{v}-1.
	\]
	Moreover, a basis for $\Hull(PRM(q,m,v))$ is given by $\{\ev(f)\}_{f\in \mathcal{M}}$ where $\mathcal{M}$ is the set of monomials of degree $v$ in $x_0,x_1,\ldots, x_m$ except for the monomial $x_m^v$.
\end{theorem}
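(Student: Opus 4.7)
The plan is to combine the duality result (Theorem~\ref{thm-Sor}) with a direct character-sum computation on $\mathcal{P}'_m$. Since $1\le v\le(q-1)/2<q-1$, in particular $v\not\equiv 0\pmod{q-1}$, so $PRM(q,m,v)^{\perp}=PRM(q,m,m(q-1)-v)$. Consequently, an element $\ev(f)\in PRM(q,m,v)$ lies in $\Hull(PRM(q,m,v))$ if and only if $\langle\ev(f),\ev(g)\rangle=\sum_{P\in\mathcal{P}'_m}(fg)(P)=0$ for every monomial $g$ of degree $v$. Since $fg$ is itself a monomial of degree $2v\le q-1$, the whole problem reduces to computing $\sum_{P\in\mathcal{P}'_m}h(P)$ for monomials $h$ of small degree.

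The key technical lemma I would establish is that, for any monomial $h=x_0^{e_0}\cdots x_m^{e_m}$ of degree $d$ with $1\le d<q-1$, the sum $\sum_{P\in\mathcal{P}'_m}h(P)$ equals $1$ when $h=x_m^{d}$ and vanishes otherwise. To prove this, partition $\mathcal{P}'_m$ according to the position $j\in\{0,\dots,m\}$ of the leading $1$ in each standard representative. The contribution from the stratum $\{(0,\dots,0,1,a_{j+1},\dots,a_m):a_i\in\F_q\}$ vanishes unless $e_0=\dots=e_{j-1}=0$; in that case it equals $\prod_{i=j+1}^{m}\sum_{a\in\F_q}a^{e_i}$. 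Using the standard power-sum identity that $\sum_{a\in\F_q}a^{e}$ equals $-1$ when $e$ is a positive multiple of $q-1$ and is $0$ otherwise, the hypothesis $d<q-1$ forces every $e_i$ with $i>j$ to vanish. This isolates $j=m$ and $h=x_m^{d}$, with contribution equal to the empty product $1$.

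Applying the lemma with $d=2v$, which is strictly less than $q-1$ when $v<(q-1)/2$, yields $\langle\ev(f),\ev(g)\rangle\ne 0$ exactly when $fg=x_m^{2v}$, i.e.\ when $f=g=x_m^{v}$. Two consequences follow. First, for any monomial $f$ of degree $v$ with $f\ne x_m^{v}$, $\ev(f)$ is orthogonal to every element of the monomial basis of $PRM(q,m,v)$ from Lemma~\ref{lem-parameters}, so $\ev(f)\in PRM(q,m,v)^{\perp}$ and hence $\ev(f)\in\Hull(PRM(q,m,v))$; this produces $\binom{m+v}{v}-1$ linearly independent vectors $\{\ev(f):f\in\mathcal{M}\}$ in the hull. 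Second, $\langle\ev(x_m^{v}),\ev(x_m^{v})\rangle=1\ne 0$ shows $\ev(x_m^{v})\notin PRM(q,m,v)^{\perp}$, so the hull is a proper subspace of $PRM(q,m,v)$ and has dimension at most $\binom{m+v}{v}-1$. The two bounds match, giving both the dimension formula and the claimed basis.

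The main obstacle is the bookkeeping in the power-sum step, in particular correctly stratifying $\mathcal{P}'_m$ by the leading coordinate and handling the products of power sums over $\F_q$. The argument uses the strict inequality $2v<q-1$ in an essential way: at the boundary $v=(q-1)/2$ a second, cancelling contribution from the stratum $j=m-1$ appears and every pairwise inner product vanishes, so the code becomes self-orthogonal and that boundary case would require a separate analysis.
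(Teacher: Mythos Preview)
Your approach is correct and coincides with the methodology the paper itself employs in Theorems~\ref{thm-k_to_3/2(q-1)} and~\ref{th_k_to_2(q-1)}: stratify $\mathcal{P}'_m$ by the position of the leading~$1$, reduce each inner product $\langle\ev(f),\ev(g)\rangle$ to a product of power sums $\sum_{\alpha\in\F_q}\alpha^{e_i}$, and use that such a sum vanishes unless $e_i$ is a positive multiple of $q-1$. The paper does not supply its own proof of Theorem~\ref{thm-q_large}---it is quoted from Kaplan--Kim---so there is nothing further to compare against; your key lemma (that $\sum_{P\in\mathcal{P}'_m}h(P)=1$ when $h=x_m^d$ and $=0$ for every other monomial of degree $d<q-1$) is precisely the content of the paper's entrywise analysis of $G'(G')^T$ specialized to this range.

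Your remark about the boundary $v=(q-1)/2$ is not a gap in your argument but a genuine inconsistency in the statement as printed. For $q$ odd and $v=(q-1)/2$ the cancellation you describe between the strata $j=m$ and $j=m{-}1$ is real: every inner product of basis evaluations vanishes, the code is self-orthogonal, and $\dim(\Hull)=\binom{m+v}{v}$ rather than $\binom{m+v}{v}-1$. This agrees with the self-orthogonality criterion the paper itself tabulates in Section~\ref{dual-containing} (namely $2v\equiv 0\pmod{q-1}$), so the hypothesis in Theorem~\ref{thm-q_large} should presumably read $1\le v<\tfrac{q-1}{2}$. For even $q$ the condition $v\le(q-1)/2$ already forces $2v<q-1$, and your proof covers the full range stated.
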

\begin{cor}\label{cor-q_large}\citeonline{kaplankim}
	Suppose that $m(q-1)- \frac{q-1}{2} < v \le m(q-1)$ and let $\ell = m(q-1) - v$. Then 
	\[
	\dim(\Hull(PRM(q,m,v)))=\dim (PRM(q,m,\ell))-1 =\binom{m+\ell}{\ell}-1.
	\]
\end{cor}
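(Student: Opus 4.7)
The plan is to reduce the corollary to Theorem~\ref{thm-q_large} by combining two standard facts: the tautology $\Hull(C)=\Hull(C^\perp)$ and Sørensen's dual description of a projective Reed--Muller code (Theorem~\ref{thm-Sor}).

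First I would record the elementary identity that for any linear code $C$,
\[
\Hull(C)=C\cap C^\perp = C^\perp\cap(C^\perp)^\perp = \Hull(C^\perp),
\]
so in particular $\dim\Hull(PRM(q,m,v)) = \dim\Hull(PRM(q,m,v)^\perp)$. This converts a statement about a code of very large dimension into one about a code of small dimension, which is the regime where Theorem~\ref{thm-q_large} operates.

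Next, set $\ell = m(q-1)-v$. The hypothesis $m(q-1)-\tfrac{q-1}{2} < v \le m(q-1)$ translates directly to $0 \le \ell < \tfrac{q-1}{2}$. Because this range of $v$ has length strictly less than $q-1$, the only value in it that is divisible by $q-1$ is $v=m(q-1)$. For every other $v$ in the range, Theorem~\ref{thm-Sor}(i) gives $PRM(q,m,v)^\perp = PRM(q,m,\ell)$ with $1 \le \ell < \tfrac{q-1}{2}$, so Theorem~\ref{thm-q_large} applies to $PRM(q,m,\ell)$ and yields
\[
\dim\Hull(PRM(q,m,v)) = \dim\Hull(PRM(q,m,\ell)) = \binom{m+\ell}{\ell}-1,
\]
which is the claim.

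The only case requiring separate attention is the boundary value $v = m(q-1)$ (so $\ell = 0$), where Theorem~\ref{thm-Sor}(ii) adds an all-ones vector to the dual: $PRM(q,m,v)^\perp = \Spa_{\F_q}\{\mathbf{1}, PRM(q,m,0)\} = \Spa_{\F_q}\{\mathbf{1}\}$. Here $\Hull(\Spa\{\mathbf{1}\}) = \{\alpha\mathbf{1} : n\alpha = 0\}$, which is trivial exactly when $n$ is coprime to $\charac(\F_q)$. Since $(q-1)n = q^{m+1}-1 \equiv -1 \pmod{\charac(\F_q)}$, this is automatic, and we get $\dim\Hull = 0 = \binom{m}{0}-1$, matching the formula. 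I expect this boundary case to be the main technical obstacle, because Sørensen's duality is ``off by one'' there; but the mod-$p$ computation of $n$ dispatches it cleanly.
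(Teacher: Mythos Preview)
Your argument is correct and follows the same template the paper uses for its own analogous corollaries (see the proof of Corollary~\ref{cor-k_to_3/2(q-1)}): invoke Theorem~\ref{thm-Sor} to identify the dual as $PRM(q,m,\ell)$, use $\Hull(C)=\Hull(C^\perp)$, and then apply the theorem for small order (here Theorem~\ref{thm-q_large}). The paper itself does not reprove Corollary~\ref{cor-q_large}, as it is quoted from \cite{kaplankim}, but your method is exactly the intended one.

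Your separate handling of the endpoint $v=m(q-1)$ (equivalently $\ell=0$) is a genuine addition: neither Theorem~\ref{thm-q_large} nor Theorem~\ref{thm-Sor}(i) covers this value, and your computation that $n=\pi_m\equiv 1\pmod{\charac(\F_q)}$ forces $\Hull(\Spa\{\mathbf 1\})=\{0\}$ is precisely Lemma~\ref{1notinPRM} in disguise. The paper records this LCD case separately in the table in Section~\ref{dual-containing} rather than folding it into the corollary, so you have in fact been slightly more careful here than the surrounding text.
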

\begin{theorem}\label{thm-k_to_qm1}\citeonline{kaplankim}
	Let $\frac{q-1}{2} < v < q-1$.  Then 
	\[
	\dim(\Hull(PRM(q,m,v))) = \dim(PRM(q,m,v)) - (2v+1 -(q-1)).
	\]
	Moreover, a basis for $\Hull(PRM(q,m,v))$ is given by $\{\ev(f)\}_{f\in \mathcal{M}}$ where $\mathcal{M}$ is the set of monomials of degree $v$ in $x_0,x_1,\ldots, x_m$ except for the monomials $x_{m-1}^{v-a} x_m^a$ where $q-1-v \le a \le v$.
\end{theorem}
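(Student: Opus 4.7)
The plan is to apply Proposition \ref{prop-hull} with the generator matrix $G$ whose rows are $\ev(x^\alpha)$ for the $\binom{m+v}{v}$ monomials $x^\alpha$ of degree $v$; these rows are linearly independent by Lemma \ref{lem-parameters} since $v<q$, so $G$ really is a generator matrix. The identity $\dim\Hull(PRM(q,m,v))=\binom{m+v}{v}-\rank(GG^T)$ reduces the theorem to computing $\rank(GG^T)$, whose $(\alpha,\beta)$ entry is the sum $\sum_{P\in\mathcal{P}'_m} P^{\alpha+\beta}$ over standard representatives, i.e.\ the evaluation sum of a single monomial of total degree $2v$.

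To evaluate $S(\gamma):=\sum_{P\in\mathcal{P}'_m} P^\gamma$ for a monomial of degree $2v$, I group points by the position $i$ of their leading $1$ and use $0^0=1$. A point with leading $1$ at position $i$ contributes $0$ unless $\gamma_0=\dots=\gamma_{i-1}=0$, and the total contribution over all such points equals $\prod_{j=i+1}^{m}\sum_{p\in\F_q} p^{\gamma_j}$. The inner sum equals $-1$ when $\gamma_j\ge 1$ and $(q-1)\mid\gamma_j$, and is $0$ otherwise (in particular, it is $0$ when $\gamma_j=0$ because $q=0$ in $\F_q$). Since $2v<2(q-1)$, at most one $\gamma_j$ with $j>i$ can be a positive multiple of $q-1$, which forces $i\ge m-1$. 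A short case analysis then gives $S(\gamma)=1$ when $\gamma=(0,\dots,0,2v)$, $S(\gamma)=-1$ when $\gamma=(0,\dots,0,2v-(q-1),q-1)$, and $S(\gamma)=0$ otherwise.

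Consequently $(GG^T)_{\alpha,\beta}\neq 0$ forces both $\alpha$ and $\beta$ to be of the form $x_{m-1}^a x_m^{v-a}$. Writing $f_a:=x_{m-1}^a x_m^{v-a}$ for $0\le a\le v$, the $(v+1)\times(v+1)$ submatrix $M$ satisfies $M_{0,0}=1$, $M_{a,b}=-1$ when $a+b=2v-(q-1)$, and $M_{a,b}=0$ otherwise. Setting $N:=2v-(q-1)$, rows and columns indexed by $a>N$ are identically zero, while on the active block $a\in\{0,1,\dots,N\}$ the matrix takes the shape $-J+E_{0,0}$, where $J$ is the antidiagonal permutation matrix and $E_{0,0}$ is the $(0,0)$ standard unit. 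Since $v>(q-1)/2$ forces $N>0$, the entry $E_{0,0}$ is off the antidiagonal, so the matrix-determinant-lemma rank-one update (using $J^{-1}=J$) shows this block is nonsingular. Hence $\rank(GG^T)=N+1=2v+1-(q-1)$, giving the claimed dimension.

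For the basis assertion, $\ev(f)\in\Hull(PRM(q,m,v))$ is equivalent to the row of $GG^T$ indexed by $f$ being zero, i.e.\ $\ev(f)$ being orthogonal to every generator. By the analysis above, the monomials $x^\alpha$ with zero row are exactly those not of the form $f_a$ with $a\in\{0,1,\dots,N\}$; reparametrising by $a'=v-a$ turns this exclusion into $\{x_{m-1}^{v-a'}x_m^{a'}:q-1-v\le a'\le v\}$, matching the statement. The remaining evaluations are linearly independent since $v<q$, and their count agrees with $\dim\Hull(PRM(q,m,v))$, so they form a basis. The main technical obstacle is the index bookkeeping together with verifying nonsingularity of the active block; this is precisely where the hypothesis $v>(q-1)/2$ enters, as for $v=(q-1)/2$ the two special shapes of $\gamma$ coincide and the argument degenerates.
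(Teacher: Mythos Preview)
Your argument is correct. The paper does not supply its own proof of this statement---it is quoted from Kaplan and Kim \cite{kaplankim}---but your method is exactly the template the paper itself uses for its extensions in Theorems~\ref{thm-k_to_3/2(q-1)} and~\ref{th_k_to_2(q-1)}: write down the Gram matrix $GG^T$, evaluate its entries via the decomposition of $\mathcal{P}'_m$ by leading-coordinate position together with the power-sum identity~\eqref{eq-sum}, and read off the rank from the surviving nonzero block. Your observation that $q-1<2v<2(q-1)$ permits at most one index $j>i$ with $\gamma_j$ a positive multiple of $q-1$, forcing $i\ge m-1$, is the clean analogue of the paper's case count in the higher ranges. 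The one place your write-up is slightly slicker than the paper's later proofs is the rank computation: rather than enumerating nonzero entries and arguing about row operations, you identify the active $(N+1)\times(N+1)$ block as $-J+E_{0,0}$ and invoke the matrix determinant lemma with $J^{-1}=J$ and $(J)_{0,0}=0$ (using $N>0$) to get nonsingularity directly. That is a tidy shortcut. Everything else---the identification of the excluded monomials and the reindexing $a'=v-a$ to match the stated range $q-1-v\le a'\le v$---checks out, and the hypothesis $v>\tfrac{q-1}{2}$ is used exactly where you say, to separate the two nonzero shapes of $\gamma$.
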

\begin{cor}\label{cor-k_to_qm1}\citeonline{kaplankim}
	Suppose that $(m-1)(q-1) < v < m(q-1)- \frac{q-1}{2}$ and let $\ell = m(q-1) - v$. Then 
	\[
	\dim(\Hull(PRM(q,m,v))) = \dim (PRM(q,m,\ell))- (2\ell+1 -(q-1)) =\binom{m+\ell}{\ell}-(2\ell+1 -(q-1)).
	\]
\end{cor}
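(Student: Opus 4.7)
The plan is to deduce the corollary from Theorem \ref{thm-k_to_qm1} via duality. The key observation is that the hull operation is self-dual: for any linear code $C$, one has $\Hull(C^\perp) = C^\perp \cap (C^\perp)^\perp = C^\perp \cap C = \Hull(C)$, so in particular $\dim(\Hull(C)) = \dim(\Hull(C^\perp))$.

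I would first verify that $v$ is not divisible by $q-1$ in the specified range. Since $(m-1)(q-1) < v < m(q-1) - \tfrac{q-1}{2} < m(q-1)$, the integer $v$ lies strictly between the consecutive multiples $(m-1)(q-1)$ and $m(q-1)$ of $q-1$, hence $v \not\equiv 0 \pmod{q-1}$. By Theorem \ref{thm-Sor}(i), this gives $PRM(q,m,v)^\perp = PRM(q,m,\ell)$ for $\ell = m(q-1) - v$.

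Next I would translate the hypothesis on $v$ into a hypothesis on $\ell$. The bounds $(m-1)(q-1) < v < m(q-1) - \tfrac{q-1}{2}$ are equivalent to $\tfrac{q-1}{2} < \ell < q-1$, which is precisely the range covered by Theorem \ref{thm-k_to_qm1}. Applying that theorem to $PRM(q,m,\ell)$ yields
\[
\dim(\Hull(PRM(q,m,\ell))) = \dim(PRM(q,m,\ell)) - (2\ell + 1 - (q-1)).
\]
Combining this with the self-duality identity $\dim(\Hull(PRM(q,m,v))) = \dim(\Hull(PRM(q,m,\ell)))$ established above gives the first equality in the claim. For the second equality, since $\ell < q-1 < q$, Lemma \ref{lem-parameters} gives $\dim(PRM(q,m,\ell)) = \binom{m+\ell}{\ell}$, completing the proof.

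There is essentially no obstacle here; the only point requiring care is confirming the exact bounds translate correctly under $v \leftrightarrow \ell$ and that the non-divisibility condition needed to invoke Theorem \ref{thm-Sor}(i) (rather than part (ii), which would add the all-ones vector to the dual and potentially shift the hull dimension) holds throughout the range. Both checks are immediate from the strict inequalities.
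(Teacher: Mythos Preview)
Your proof is correct and follows precisely the duality argument one would expect. The paper itself does not include a proof of this corollary, since it is quoted from \cite{kaplankim} in the preliminaries; however, your argument is exactly the template the paper uses to prove its own analogous results (see the proof of Corollary~\ref{cor-k_to_3/2(q-1)}): check $v\not\equiv 0\pmod{q-1}$, invoke Theorem~\ref{thm-Sor}(i) to identify the dual, translate the range of $v$ into the range of $\ell$, and apply the corresponding theorem for $\ell$.
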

	\section{dual-containing projective reed-muller code}\label{dual-containing}
	In \cite{kaplankim},Nathan Kaplan and Jon-Lark Kim determine the necessary and sufficent conditions in which a PRM code is self-dual,self-orthgonal and LCD. However, they didn't determine the necessary and sufficent condition for which situation a PRM code is dual-containing i.e $PRM(q,m,v)^{\perp} \subseteq PRM(q,m,v)$.\\
	We give the results of Nathan Kaplan and Jon-Lark \cite{kaplankim} in the following table:
	\begin{center}
		\begin{tabular}{|l|r|}\hline
			\multicolumn{2}{|c|}{Results of Nathan Kaplan and Jon-Lark Kim\citenum{kaplankim}}\\\hline
			Self-dual      & $q,m$ both odd, $v=\frac{m(q-1)}{2}$\\\hline
			Self-orthgonal & $1\le v \le \frac{m(q-1)}{2}$, $2v \equiv 0\pmod {q-1}$\\\hline
			LCD            & $v=m(q-1)$\\\hline
		\end{tabular}
	\end{center}

We give the necessary and sufficent condition in which a PRM code is dual-containing. First, we give the following two lemmas:
\begin{lemma}\label{PRMcontain}
	Let $PRM(q,m,v)$ be a Projective Reed-Muller code and $1 \le v \le m(q-1)$. For any two distinct integer $v_1$,$v_2$ ,if $1 \le v_1 \le v_2 \le m(q-1)$ and $v_1 \equiv v_2 \pmod {q-1}$, then $PRM(q,m,v_1) \subseteq PRM(q,m,v_2)$.
\end{lemma}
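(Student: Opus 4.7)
The plan is to reduce the containment to a statement about single monomials and then multiply by a judiciously chosen power that is ``invisible'' to the evaluation map. Since $PRM(q,m,v_1)$ is by definition the image of the space of homogeneous polynomials of degree $v_1$, it is spanned by $\{\operatorname{ev}(M)\}$ as $M$ ranges over monomials $x_0^{a_0}\cdots x_m^{a_m}$ of total degree $v_1$. So it suffices to exhibit, for each such $M$, a homogeneous polynomial $M'$ of degree $v_2$ with $\operatorname{ev}(M)=\operatorname{ev}(M')$.

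Given such $M$, since $v_1\ge 1$ some exponent $a_{j^\ast}$ is positive. Write $v_2-v_1=k(q-1)$ with $k\ge 0$, which is possible by the hypotheses $v_1\le v_2$ and $v_1\equiv v_2\pmod{q-1}$. I would then set
\[
M' \;=\; M\cdot x_{j^\ast}^{k(q-1)},
\]
which is a monomial of degree $v_2$. The key step is the pointwise verification that $M(P')=M'(P')$ for every $P'\in\mathcal{P}'_m$. Let $\alpha=x_{j^\ast}(P')\in\F_q$. If $\alpha\ne 0$ then Fermat's little theorem gives $\alpha^{q-1}=1$, so the extra factor $x_{j^\ast}^{k(q-1)}$ contributes $1$ and the two evaluations agree. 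If $\alpha=0$, then because $a_{j^\ast}>0$ the factor $x_{j^\ast}^{a_{j^\ast}}$ already appearing in $M$ forces $M(P')=0$, and the same factor shows $M'(P')=0$.

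The argument is essentially routine; the only mild pitfall is the edge case $v_1=0$ (where no positive $a_{j^\ast}$ exists), which is ruled out by the hypothesis $v_1\ge 1$. Note that this proof does not use any upper bound on $v_2$, nor does it actually invoke the convention about standard representatives in $\mathcal{P}'_m$, since Fermat's little theorem is applied coordinatewise. Once the monomial case is established, linearity of $\operatorname{ev}$ yields $PRM(q,m,v_1)\subseteq PRM(q,m,v_2)$.
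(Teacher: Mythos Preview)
Your proof is correct and follows essentially the same idea as the paper's: raise a variable that already occurs to an extra $(q-1)$th power so that Fermat's little theorem makes the new factor invisible at nonzero coordinates, while vanishing is preserved at zero coordinates. The only difference is cosmetic---you reduce to a single monomial and pick $j^\ast$ with $a_{j^\ast}>0$ for that monomial, whereas the paper tries to pick one $X_j$ for the whole polynomial $F$ at once; your monomial-by-monomial version is arguably cleaner, since a single variable appearing in \emph{every} term of an arbitrary $F$ need not exist.
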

\begin{proof}
	For any codeword $\mathbf{c}$ in $PRM(q,m,v_1)$ ,there is an corresponding but no necessarily unique polynomial $F(X)$ in $\F_q[X_0,\dots\,X_m]^{v_1}_h$ such that ${\mbox{ev}}_{\mathcal{P}'_m}(F(X))=\mathbf{c}$. We can write $F(X)=\sum_{i=1}^{t}a_iX_0^{i_0}\cdots X_m^{i_m}$ and let $X_j$ be the indeterminate with the least subscript among $X_0,\dots,X_m$ whose index is nonzero in all terms of $F(X)$. Because $1 \le v_1 \le v_2 \le m(q-1)$ and $v_1 \equiv v_2 \pmod {q-1}$ ,we have that $v_2-v_1=d(q-1)$ for some $d\ge 0$. We can construct a new polynomial $G(X)=X_j^{v_2-v_1}F(X)=X_j^{d(q-1)}F(X)$. $G(X)$ has the same evaluation in each points of $\mathbb{P}^m(\F_q)$ with $F(X)$ and $G(X)$ is of degree $v_2$. Hence $\mathbf{c} \in PRM(q,m,v_2)$. 
\end{proof}
Let $PRM(q,m,v)$ be a Projective Reed-Muller code and $1 \le v \le m(q-1)$. Suppose $v \equiv 0 \pmod{q-1}$, and $\ell=m(q-1)-v$, by \ref{thm-Sor}, ${PRM(q,m,v)}^{\perp} = {{\Spa}}_{\F_q}  \{ {\bf 1}, PRM(q,m,\ell) \}$, in \cite{kaplankim} [Remark 2.8] , Kaplan and Kim prove that $\mathbf{1}\not \in PRM(q,m,\ell)$. Similiarily, we can prove that $\mathbf{1} \not \in PRM(q,m,v)$. Therefore for any $v \equiv 0 \pmod{q-1}$, $\mathbf{1} \not \in PRM(q,m,v)$.
\begin{lemma}\label{1notinPRM}
	For any $v \equiv 0 \pmod{q-1}$, $\mathbf{1} \not \in PRM(q,m,v)$.
\end{lemma}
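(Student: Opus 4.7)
The plan is to argue by contradiction using a sum-evaluation trick. Suppose $\mathbf{1}\in PRM(q,m,v)$ with $v=r(q-1)$ for some $r\ge 1$ and $v\le m(q-1)$. Then there exists a homogeneous polynomial $F\in\F_q[X_0,\ldots,X_m]^v_h$ with $F(P)=1$ for every standard representative $P\in\mathcal{P}'_m$. Using homogeneity and the hypothesis $v\equiv 0\pmod{q-1}$, I would first upgrade this to a statement about \emph{all} nonzero vectors of $\F_q^{m+1}$: for any $\lambda\in\F_q^*$, $F(\lambda P)=\lambda^v F(P)=(\lambda^{q-1})^r=1$. Since every $Q\in\F_q^{m+1}\setminus\{0\}$ equals $\lambda P$ for some standard representative $P$ and some $\lambda\in\F_q^*$, this gives $F(Q)=1$ for all $Q\ne 0$. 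Of course $F(0)=0$ since $\deg F\ge 1$.

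Next I would compute the sum $S:=\sum_{Q\in\F_q^{m+1}} F(Q)$ in two different ways and derive a contradiction. On one hand, by the previous paragraph,
\[
S=\sum_{Q\ne 0} 1 + F(0) = q^{m+1}-1 = -1 \text{ in }\F_q.
\]
On the other hand, write $F=\sum_{\mathbf{i}} a_{\mathbf{i}} X_0^{i_0}\cdots X_m^{i_m}$ with $i_0+\cdots+i_m=v$. Then
\[
S=\sum_{\mathbf{i}} a_{\mathbf{i}} \prod_{j=0}^{m}\Bigl(\sum_{x\in\F_q} x^{i_j}\Bigr).
\]
The standard character/power-sum identity gives $\sum_{x\in\F_q} x^i = -1$ if $i>0$ and $(q-1)\mid i$, and $\sum_{x\in\F_q}x^i=0$ otherwise (in particular for $i=0$, where the sum is $q=0$ in $\F_q$). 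Hence a monomial contributes a nonzero term only when every $i_j$ is strictly positive \emph{and} divisible by $q-1$, forcing $v=\sum i_j \ge (m+1)(q-1)$. But this contradicts $v\le m(q-1)$, so every term vanishes and $S=0$. Comparing with $S=-1$ gives the desired contradiction.

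The main point to be careful with is the characteristic-zero vs.\ characteristic-$p$ arithmetic at two places: verifying that $q^{m+1}-1=-1$ in $\F_q$, and verifying that the $i_j=0$ factor gives $q=0$ in $\F_q$. Both are immediate once stated, so the only substantive step is the bookkeeping on power sums and the bound $v\le m(q-1)$, which I expect to present in one or two lines.
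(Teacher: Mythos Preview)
Your argument is correct, but it is genuinely different from the paper's. The paper invokes S\o{}rensen's duality theorem (Theorem~\ref{thm-Sor}(ii)) to observe that $\mathbf{1}\in PRM(q,m,v)^{\perp}$ whenever $v\equiv 0\pmod{q-1}$; then, were $\mathbf{1}$ also in $PRM(q,m,v)$, one would have $\langle\mathbf{1},\mathbf{1}\rangle=0$, whereas $\langle\mathbf{1},\mathbf{1}\rangle=\pi_m=q^m+\cdots+q+1\equiv 1\pmod{\charac(\F_q)}$, a contradiction. Your route is more self-contained: you never touch the dual code, but instead lift the hypothesis $F(P)=1$ on standard representatives to $F(Q)=1$ on all of $\F_q^{m+1}\setminus\{0\}$ via the homogeneity relation $F(\lambda P)=\lambda^{r(q-1)}F(P)=1$, and then evaluate $\sum_{Q}F(Q)$ two ways using the power-sum identities. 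The advantage of your approach is that it does not rely on S\o{}rensen's theorem as a black box and makes explicit where the bound $v\le m(q-1)$ enters (it forces every monomial sum to vanish); the advantage of the paper's approach is brevity once the duality result is already available in the surrounding text.
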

\begin{proof}
	Because ${PRM(q,m,v)}^{\perp} = {{\Spa}}_{\F_q}  \{ {\bf 1}, PRM(q,m,\ell) \}$, we have that $\mathbf{1}  \in PRM(q,m,v)^{\perp}$. Suppose that $\mathbf{1} \in PRM(q,m,v)$, we have that $\langle\mathbf{1},\mathbf{1}\rangle=0$.But 
	$$\langle\mathbf{1},\mathbf{1}\rangle=1\times n=1\times{\pi_m}=q^m+\cdots+q+1\equiv 1 \pmod{char(\F_q)}$$
	which is a contradiction.
\end{proof}
\begin{theorem}
	Let $PRM(q,m,v)$ be a projective Reed-Muller code and $1 \le v \le m(q-1)$. Then PRM(q,m,v) is dual-containing i.e $PRM(q,m,v)^{\perp} \subseteq PRM(q,m,v)$ if and only if $\frac{m(q-1)}{2} \le v < m(q-1)$ , $2v  \equiv 0 \mod (q-1)$ , $v \not \equiv 0 \mod (q-1)$,i.e $v$ is an odd multiple of $\frac{q-1}{2}$.
\end{theorem}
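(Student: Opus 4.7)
The natural approach is to split on whether $v \equiv 0 \pmod{q-1}$, since Theorem~\ref{thm-Sor} gives two different descriptions of the dual in these cases. Throughout, set $\ell = m(q-1) - v$.

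The plan is to dispose of the case $v \equiv 0 \pmod{q-1}$ first. When $v \equiv 0 \pmod{q-1}$, Theorem~\ref{thm-Sor}(ii) tells us $\mathbf{1} \in PRM(q,m,v)^\perp$, so any dual-containing code must contain $\mathbf{1}$. But Lemma~\ref{1notinPRM} says $\mathbf{1} \notin PRM(q,m,v)$, which is a contradiction. Hence dual-containing never occurs in this case, which already accounts for the condition $v \not\equiv 0 \pmod{q-1}$ in the theorem statement.

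Now suppose $v \not\equiv 0 \pmod{q-1}$. Since $m(q-1) \equiv 0 \pmod{q-1}$, this is equivalent to $\ell \not\equiv 0 \pmod{q-1}$, so Theorem~\ref{thm-Sor}(i) applies to both $v$ and $\ell$: we have $PRM(q,m,v)^\perp = PRM(q,m,\ell)$ and, symmetrically, $PRM(q,m,\ell)^\perp = PRM(q,m,v)$. The key observation is that
\[
PRM(q,m,v)^\perp \subseteq PRM(q,m,v) \iff PRM(q,m,\ell) \subseteq PRM(q,m,\ell)^\perp,
\]
i.e., dual-containing for $PRM(q,m,v)$ is precisely self-orthogonality of $PRM(q,m,\ell)$. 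By Kaplan and Kim's self-orthogonality classification cited in the table, this happens if and only if $1 \le \ell \le \frac{m(q-1)}{2}$ and $2\ell \equiv 0 \pmod{q-1}$. Translating back via $\ell = m(q-1) - v$ gives $\frac{m(q-1)}{2} \le v \le m(q-1)-1$ and $2v \equiv 0 \pmod{q-1}$. Combined with the standing hypothesis $v \not\equiv 0 \pmod{q-1}$, this says exactly that $v$ is an odd multiple of $\frac{q-1}{2}$, matching the theorem.

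For bookkeeping, one can verify sufficiency directly without invoking the self-orthogonality classification as a black box: when $2v \equiv 0 \pmod{q-1}$ and $v + \ell = m(q-1) \equiv 0 \pmod{q-1}$, we get $v \equiv \ell \pmod{q-1}$, and $v \ge \frac{m(q-1)}{2}$ forces $\ell \le v$, so Lemma~\ref{PRMcontain} yields $PRM(q,m,\ell) \subseteq PRM(q,m,v)$. The main (and really only) obstacle is the forward direction in the case $v \not\equiv 0 \pmod{q-1}$, where we need that these congruence and inequality conditions are necessary; but this is handled cleanly by appealing to the known self-orthogonality criterion rather than re-proving it from scratch.
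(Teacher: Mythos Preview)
Your proof is correct and follows essentially the same approach as the paper: both handle the case $v \equiv 0 \pmod{q-1}$ via Lemma~\ref{1notinPRM}, establish sufficiency through Lemma~\ref{PRMcontain}, and reduce necessity in the remaining case to Kaplan and Kim's self-orthogonality criterion applied to $PRM(q,m,\ell)$. Your organization is slightly more streamlined, since you invoke the full self-orthogonality equivalence at once, whereas the paper separately treats the sub-case $v < \frac{m(q-1)}{2}$ by a direct dimension comparison before appealing to the same criterion for the rest.
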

\begin{proof}
	We prove it in following cases:\\
	\textbf{Case 1:} If $q-1$ doesn't divides $v$ but $\frac{q-1}{2}$ divides $v$, then ${PRM(q,m,v)}^{\perp} =  PRM(q,m,\ell)$ by theorem \ref{thm-Sor}. Because $v \ge \ell$, ${PRM(q,m,v)}^{\perp} \subseteq PRM(q,m,v)$ by lemma \ref{PRMcontain}.\\
	We now prove that $PRM(q,m,v)$ is not dual-containing in other situation. First, we exclude the three cases proved by Kalplan and Kim in which $PRM(q,m,v)$ is self-dual, self-orthgonal or LCD.
	\\
	\textbf{Case 2:} If $v \equiv 0 \pmod{q-1}$,i.e $v$ is an even multiple of $\frac{q-1}{2}$, then ${PRM(q,m,v)}^{\perp} = {{\Spa}}_{\F_q}  \{ {\bf 1}, PRM(q,m,\ell) \}$. By lemma \ref*{1notinPRM}, $\mathbf{1} \not\subset PRM(q,m,v)$. Therefore, ${{\Spa}}_{\F_q}  \{ {\bf 1}, PRM(q,m,\ell) \} \not \subset PRM(q,m,v)$ and $PRM(q,m,v)$ is not dual-containing.\\
	\textbf{Case 3:} If $1 \le v < \frac{m(q-1)}{2}$, then $dim(PRM(q,m,v)) < dim(PRM(q,m,\ell))$ by proposition 2.4 \cite{kaplankim}, which means impossible for $PRM(q,m,v)$ to be a dual-containing code. More briefly, no matter $PRM(q,m,\ell)$ or ${{\Spa}}_{\F_q}  \{ {\bf 1}, PRM(q,m,\ell) \}$ cannot be cotained in $PRM(q,m,v)$.\\
	\textbf{Case 4:} If $\frac{m(q-1)}{2} \le v \le m(q-1)$ and $2k \not \equiv 0 \pmod{q-1}$, then ${PRM(q,m,v)}^{\perp} =  PRM(q,m,\ell)$. From the proof of theorem 3.5 of \cite{kaplankim}, $PRM(q,m,l)$ is not self-orthgonal which is equivalent to the fact that $PRM(q,m,v)$ is not dual-containing.
\end{proof}
\begin{remark}\label{special-case}
	From the above theorem, we find that hull parameters of the \textbf{Case 2}, i.e $v$ is an even multiple of $\frac{q-1}{2}$,is still unknown. It cannot be analysized by the approach for those $v$ can't be divided by $q-1$ in section \ref{hulldim-q-1not-divide}. It also isn't contained in any one of self-dual, self-orthgonal, LCD or dual-containing, we will analyse it in section \ref{minihulldistance} as a special case. 
\end{remark}
	\section{Hull dimension of $PRM$ code when $q-1 < v < 2(q-1)$}\label{hulldim-q-1not-divide}
	For a Projective Reed-Muller Code determined by three parameters $q,m,v$, Kaplan and Kim compute the dimension of $\operatorname{Hull}(PRM(q,m,v))$ when $1 < v < \frac{q-1}{2}$ and $\frac{q-1}{2} < v < q-1$. By the dual code they got the dimension of $\operatorname{Hull}(PRM(q,m,v))$ when $m(q-1)-\frac{q-1}{2} < v < m(q-1)$,$(m-1)(q-1) < v < m(q-1)-\frac{q-1}{2}$. We compute the hull dimension of $PRM(q,m,v)$ in a larger range of parameters when $q-1 < v < \frac{3(q-1)}{2}$ and $\frac{3(q-1)}{2} < v < 2(q-1)$, $(m-1)(q-1)-\frac{q-1}{2} < v < (m-1)(q-1)$ and $(m-2)(q-1) < v < (m-1)(q-1)-\frac{q-1}{2}$ .\\
	\begin{lemma}\label{matrix}
		Let $G$ be an $k \times n$ matrix with rank $k$ and $G'$ be another $k' \times n$ matrixes with the same rank,$k \le k'$, then $rank(GG^T)=rank(G'(G')^T)$.
	\end{lemma}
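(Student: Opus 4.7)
The plan is to exploit a linear relationship between $G$ and $G'$. The lemma implicitly presumes that the two matrices have the same row space in $\F_q^n$ (so both have rank $k$, with $k\le k'$ because a $k'\times n$ matrix of rank $k$ necessarily satisfies $k'\ge k$); in the anticipated applications to PRM codes this is automatic, since $G'$ will be an evaluation matrix indexed by a natural spanning set of monomials (possibly with linear dependencies when $v\ge q$), while $G$ is an honest generator matrix for the same code. Under this hypothesis every row of $G'$ lies in the row space of $G$, so I can write $G' = AG$ for some $k'\times k$ matrix $A$. From $\operatorname{rank}(G')=k$ together with $\operatorname{rank}(AG)\le\operatorname{rank}(A)\le k$, I conclude that $A$ has full column rank $k$.

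Having a full-column-rank $A$, I would next produce a $k\times k'$ left inverse $B$ with $BA=I_k$, obtained by choosing $k$ linearly independent rows of $A$, inverting the resulting $k\times k$ submatrix, and extending by zeros in the remaining columns. This gives $BG'=BAG=G$, so the two Gram-style products are intertwined:
\begin{equation*}
G'(G')^T \;=\; (AG)(AG)^T \;=\; A\,(GG^T)\,A^T, \qquad GG^T \;=\; (BG')(BG')^T \;=\; B\bigl(G'(G')^T\bigr)B^T.
\end{equation*}
The first identity yields $\operatorname{rank}(G'(G')^T)\le\operatorname{rank}(GG^T)$, the second the reverse inequality, and together they give equality.

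The main obstacle is really only correctly reading the hypothesis: bare equality of rank is not sufficient for the conclusion (two rank-$k$ matrices with different row spaces can easily produce symmetric products of different ranks), so the argument hinges on recognising that the lemma is meant in the situation where $G$ and $G'$ share a row space. Once that interpretation is in place, the proof is a clean two-sided rank comparison via the intertwining matrices $A$ and $B$, with no delicate arithmetic to verify.
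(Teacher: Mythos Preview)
Your proof is correct and follows essentially the same approach as the paper: both write $G'=PG$ with $P$ a $k'\times k$ matrix of full column rank (using the implicit same-row-space hypothesis you correctly identified), and then deduce $\operatorname{rank}(G'(G')^T)=\operatorname{rank}(GG^T)$. The only cosmetic difference is that the paper applies the rank-preservation property of full-rank one-sided multiplication directly, writing $\operatorname{rank}(PGG^TP^T)=\operatorname{rank}(GG^TP^T)=\operatorname{rank}(GG^T)$ in one line, whereas you unpack this via an explicit left inverse $B$ and a two-sided sandwich inequality.
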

\begin{proof}
	We can write $G'=PG$, where $P$ is a $k' \times k$ matrix with full column rank. $(G')^T=G^TP^T$, where $P^T$ is a $k \times k'$ matrix with full row rank. Therefore, we have:
	$$
	rank(G'(G')^T)=rank(PGG^TP^T)=rank(GG^TP^T)=rank(GG^T)
	.$$
\end{proof}
By lemma \ref{matrix}, we can analyse the rank of $G'(G')^T$ to determine the hull and don't need to transfer all monomials to the reduced form first, which makes the analysis more convenient.
	\begin{theorem}\label{thm-k_to_3/2(q-1)}
		 Let $m\ge 3, q-1<v<\frac{3(q-1)}{2}$. Then
		$$
		\operatorname{dim}\left(\operatorname{Hull}\left(PRM(q,m,v)\right)\right)=k_1-\left[(3q-v-1)(v-q+2)-q+1\right]; .
		$$
		where $k_1=\operatorname{dim}\left(PRM(q,m,v)\right)=\sum_{\substack{t=v(\bmod q-1) \\
				0<t \leqslant v}}\left(\sum_{j=0}^{m+1}(-1)^j\binom{m+1}{j}\binom{t-j q+m}{t-j q}\right)$\\
		Moreover, a basis for $\operatorname{Hull}\left(PRM(q,m,v)\right)$ is given by $\{\operatorname{ev}(f)\}_{f \in \mathcal{M}}$ where $\mathcal{M}$ is the set of monomials of degree $v$ in $x_0, x_1, \ldots, x_m$ except for the $(3q-v-1)(v-q+2)-q+1$ monomials 
		$x_m^v$ and
		$x_{m-2}^{a_{m-2}}x_{m-1}^{v-a_m-a_{m-2}}x_m^{a_m}$ where 
		$0\leqslant a_{m-2} \leqslant v-(q-1)$,
		$v-(q-1)-a_{m-2} \leqslant a_m \leqslant q-1$
		 or\quad$v-(q-1)+1\leqslant a_{m-2} \leqslant 2v-2(q-1)$,
		$0 \leqslant a_m \leqslant v-a_{m-2}$.
		
	\end{theorem}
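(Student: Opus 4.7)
The plan is to apply Proposition \ref{prop-hull} in conjunction with Lemma \ref{matrix}: take $G'$ to be the matrix whose rows are the evaluations $\operatorname{ev}(x^\alpha)$ of all monomials $x^\alpha$ of degree $v$ in $x_0,\dots,x_m$. By the lemma, $\operatorname{rank}\bigl(G'(G')^T\bigr)=k_1-\dim\operatorname{Hull}(PRM(q,m,v))$, so the theorem reduces to computing the rank of $M:=G'(G')^T$. For $f=\prod x_i^{a_i},\ g=\prod x_i^{b_i}$, splitting $M_{f,g}=\sum_{P\in\mathcal{P}'_m}(fg)(P)$ by point type and using that $\sum_{t\in\F_q}t^c$ equals $-1$ when $c>0$ and $(q-1)\mid c$, and vanishes in $\F_q$ otherwise (including for $c=0$, since $q\equiv 0$), gives
\[
M_{f,g}=\sum_{j=0}^{m}\mathbb{1}\bigl[c_i=0\text{ for all }i<j\bigr]\prod_{i>j}\sigma(c_i),\qquad c_i:=a_i+b_i.
\]

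Step one is the structural reduction. Because each surviving $c_i$ with $i>j$ must be a positive multiple of $q-1$ (hence at least $q-1$) and $\sum c_i=2v<3(q-1)$ under the hypothesis on $v$, only the indices $j\in\{m-2,m-1,m\}$ can contribute; in particular, if $f$ has a positive exponent at some $x_i$ with $i\le m-3$ then every entry in its row of $M$ vanishes and $\operatorname{ev}(f)\in\operatorname{Hull}$. Thus $M$ is supported on the sub-block $M'$ indexed by the $\binom{v+2}{2}$ monomials of the form $x_{m-2}^{a}x_{m-1}^{b}x_{m}^{c}$. Unpacking the three surviving contributions on this block produces three mutually exclusive patterns in the range $q-1<v<\tfrac{3(q-1)}{2}$: the $j=m-2$ term gives $+1$ iff $c_{m-1}=c_m=q-1$ (forcing $c_{m-2}=2v-2(q-1)$); the $j=m-1$ term gives $-1$ iff $c_{m-2}=0$ and $c_m\in\{q-1,2(q-1)\}$; and the $j=m$ term gives $+1$ iff $f=g=x_m^v$. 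Consequently every entry of $M'$ lies in $\{0,\pm 1\}$, and a case analysis on $(a,b,c)$ shows that the row of $M'$ at $f=x_{m-2}^{a}x_{m-1}^{b}x_{m}^{c}$ is nonzero precisely for the monomials enumerated in the statement: the first inequality system encodes feasibility of a compatible partner $g$ at $j=m-1$, the second at $j=m-2$ for $a\ge 1$, and $x_m^v$ accounts for the $j=m$ contribution.

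The main obstacle is the final rank computation, namely showing that the square submatrix of $M'$ indexed by the bad monomials is nonsingular. My plan is to impose a block-triangular ordering: first list the bad monomials with $a\ge 1$ (which interact only through $j=m-2$), then those with $a=0$ (which also receive $j=m-1$ contributions). The $a\ge 1$ block is governed by the involutory pairing $(a,b,c)\longleftrightarrow(2v-2(q-1)-a,\ q-1-b,\ q-1-c)$; one checks this is a fixed-point-free involution on the relevant index set, so after reordering the block becomes a signed permutation matrix. The $a=0$ block, together with the isolated row for $f=x_m^v$, admits an analogous involutory pairing for the $j=m-1$ contribution. Verifying invertibility of each diagonal block and that the off-diagonal coupling is strictly triangular completes the computation $\operatorname{rank}(M')=(3q-v-1)(v-q+2)-q+1$. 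Combining this with $\dim\operatorname{Hull}=k_1-\operatorname{rank}(M')$ gives the dimension formula, and the basis description follows because the evaluations of the non-bad monomials lift a basis of a complementary subspace to the row span of $M'$ inside the image of $G'=PRM(q,m,v)$.
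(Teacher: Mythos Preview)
Your strategy coincides with the paper's (compute $\operatorname{rank}(G'(G')^T)$ via Proposition~\ref{prop-hull} and Lemma~\ref{matrix}, reduce to the block indexed by monomials in $x_{m-2},x_{m-1},x_m$, identify the three nonzero patterns), but the rank step has a genuine gap. Your assertion that the row of $M'$ at $f$ is nonzero \emph{precisely} for the listed monomials is false: take $f=x_{m-1}^{v}$, so $a_{m-2}=a_m=0$. The $j=m-1$ pattern gives $M'_{f,g}=-1$ for $g=x_{m-1}^{v-q+1}x_m^{q-1}$, yet $x_{m-1}^{v}$ is not among the excluded monomials (the first range forces $a_m\ge v-(q-1)>0$, the second forces $a_{m-2}\ge v-(q-1)+1>0$). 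In fact both inequality systems in the statement describe the $j=m-2$ (Case~4) partner, not $j=m-1$ as you write. Consequently, proving that the bad$\times$bad submatrix is nonsingular yields only $\operatorname{rank}(M')\ge N$; you still owe the upper bound, i.e.\ that the ``good'' rows carrying $-1$'s lie in the span of the bad rows. The paper handles this (tersely) by observing that all $-1$ entries sit in the $a_{m-2}=b_{m-2}=0$ corner and can be eliminated by elementary row/column operations using the $+1$ pattern, after which only the permutation-type $+1$'s from Cases~1 and~4 survive.

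Two smaller issues with your block argument: the involution $(a,b,c)\mapsto(2v-2(q-1)-a,\,q-1-b,\,q-1-c)$ is not fixed-point-free when $q$ is odd (fixed point at $a=v-(q-1)$, $b=c=(q-1)/2$), though this is harmless for the permutation-matrix conclusion; and it does not stabilize the set $\{a\ge1\}$, since $a=2v-2(q-1)$ maps to $a'=0$, so the ordering you propose is not block-triangular as described.
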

	
	\begin{proof}
		Since $q-1<v<\frac{3(q-1)}{2}$, Lemma \ref{lem-parameters-2} implies that $PRM(q,m,v)$ has dimension $k_1=\sum_{\substack{t=v(\bmod q-1) \\
				0<t \leqslant v}}\left(\sum_{j=0}^{m+1}(-1)^j\binom{m+1}{j}\binom{t-j q+m}{t-j q}\right)$. We take the lexicographic ordering of the $k_1$ reduced monomials of degree $v$ in $x_0, \ldots, x_m$ and write this ordered list of monomials as $m_1, \ldots, m_{k_1}$. It is clear that $\operatorname{ev}\left(m_1\right), \ldots, \operatorname{ev}\left(m_{k_1}\right)$ form an ordered basis for $PRM(q,m,v)$. These choices determine a generator matrix $G$. Let $G'$ be the matrix whose rows belongs to  $\operatorname{ev}_{\mathcal{P}'_m}(M_{\F_q}[X_1, \dots, X_m]^v)$, not necessary to be reduced. We can see that all rows of $G$ are the rows of $G'$. We will analyse the nonzero entries of $G' (G')^T$ to determine the rank of $GG^T$. The form of $G' (G')^T$ will make it clear that it has rank $(3q-v-1)(v-q+2)-q+1$, and moreover, will make it clear which monomials of degree $v$ correspond to codewords of $PRM(q,m,v)$ that also lie in ${PRM(q,m,v)}^\perp$.
		
		Suppose $m_i=x_0^{a_0} \cdots x_m^{a_m}$ and $m_j=x_0^{b_0} \cdots x_m^{b_m}$ where $0 \leqslant a_0, \ldots, a_m, b_0, \ldots, b_m \leqslant v$ and $\sum_{\ell=0}^m a_{\ell}=\sum_{\ell=0}^m b_{\ell}=v$. 
		We recall the basic fact that for any positive integer $r$, we have
		\begin{equation}\label{eq-sum}
			\sum_{\beta\in \F_q} \beta^r =
			\begin{cases}
				-1 & \text{if } r \equiv 0 \pmod{q-1} \\
				0 & \text{otherwise}
			\end{cases}.
		\end{equation}
	Our choice of $\mathcal{P}'_m$ implies that the $(i,j)$-entry of $G'(G')^T$ is
	\begin{eqnarray*}
		\sum_{p \in \mathcal{P}'_m} (m_i m_j)(p) & = & \sum_{\alpha_1,\ldots, \alpha_{m} \in \F_q} 1^{a_0+b_0} \alpha_1^{a_1+b_1} \alpha_2^{a_2+b_2}\cdots \alpha_{m}^{a_{m}+b_{m}}   \\
		& & + \sum_{\alpha_2,\ldots, \alpha_{m} \in \F_q} 0^{a_0+b_0} 1^{a_{1}+b_{1}}
		\alpha_2^{a_2+b_2} \cdots \alpha_{m}^{a_{m}+b_{m}}  \\
		& & + \cdots \\
		& & + \sum_{\alpha_m\in \F_q}  \left(0^{a_0+b_0} \cdots 0^{a_{m-2}+b_{m-2}} \right)1^{a_{m-1}+b_{m-1}}  \alpha_m^{a_m+b_m} \\
		& & + \left(0^{a_0+b_0}\cdots 0^{a_{m-1}+b_{m-1}}\right) 1^{a_m+b_m}.
	\end{eqnarray*}
	For any $i$, we have
	\begin{eqnarray*}
		& &  \sum_{\alpha_{i+1},\ldots, \alpha_{m} \in \F_q} \left( 0^{a_{0}+b_{0}} \cdots 0^{a_{i-1}+b_{i-1}}\right) 1^{a_{i}+b_{i}}
		\alpha_{i+1}^{a_{i+1}+b_{i+1}}\cdots \alpha_{m}^{a_{m}+b_{m}}   \\
		& = &  \left( 0^{a_{0}+b_{0}} \cdots 0^{a_{i-1}+b_{i-1}}\right) 1^{a_{i}+b_{i}}
		\left(\sum_{\alpha_{i+1}\in \F_q}  \alpha_{i+1}^{a_{i+1}+b_{i+1}}\right) \cdots \left(\sum_{\alpha_m\in \F_q}  \alpha_m^{a_m+b_m}\right).
	\end{eqnarray*}
	Recall that $0^0 = 1$ denotes the empty product, so if $r$ is a nonnegative integer, then $0^r = 1$ when $r=0$ and $0^r = 0$ otherwise.
	
	We see that the $(i, j)$-entry of $G' (G')^T$ is given by
		$$
		\sum_{p \in \mathcal{P}^{\prime}_m}\left(m_i m_j\right)(p)=\sum_{i=0}^m\left(\left(0^{a_0+b_0} \cdots 0^{a_{i-1}+b_{i-1}}\right) 1^{a_i+b_i}\left(\sum_{\alpha_{i+1} \in \mathbb{F}_q} \alpha_{i+1}^{a_{i+1}+b_{i+1}}\right) \cdots\left(\sum_{\alpha_m \in \mathbb{F}_q} \alpha_m^{a_m+b_m}\right)\right) .
		$$
		
		Equation above implies that $\sum_{\alpha \in \mathbb{F}_q} \alpha^{a+b}=0$ unless $a+b$ is positive and $a+b \equiv 0(\bmod q-1)$. In particular, this is zero unless $a+b \geqslant q-1$. Since $2(q-1)<2 v<3(q-1)$, there is at most two $i$ s for which $a_i+b_i$ is be both positive and divisible by $q-1$. This implies that for any $i \leqslant m-3$,
		$$
		\left(0^{a_0+b_0} \cdots 0^{a_{i-1}+b_{i-1}}\right) 1^{a_i+b_i}\left(\sum_{\alpha_{i+1} \in \mathbb{F}_q} \alpha_{i+1}^{a_{i+1}+b_{i+1}}\right) \cdots\left(\sum_{\alpha_m \in \mathbb{F}_q} \alpha_m^{a_m+b_m}\right)=0 .
		$$
		
		We see that
		$$
		\sum_{p \in \mathcal{P}^{\prime}_m}\left(m_i m_j\right)(p)=\left(0^{a_0+b_0} \cdots 0^{a_{m-3}+b_{m-3}}\right) 1^{a_{m-2}+b_{m-2}}\left(\sum_{\alpha_{m-1} \in \mathbb{F}_q} \alpha_{m-1}^{a_{m-1}+b_{m-1}}\right)\left(\sum_{\alpha_m \in \mathbb{F}_q} \alpha_m^{a_m+b_m}\right)
		$$
		$$+\left(0^{a_0+b_0} \cdots 0^{a_{m-2}+b_{m-2}}\right) 1^{a_{m-1}+b_{m-1}}\left(\sum_{\alpha_m \in \mathbb{F}_q} \alpha_m^{a_m+b_m}\right)+0^{a_0+b_0} \cdots 0^{a_{m-1}+b_{m-1}} 1^{a_m+b_m}
		$$
		The necessary condition for $\sum_{p \in \mathcal{P}^{\prime}_m}\left(m_i m_j\right)(p) \ne 0$ is that $a_0=\cdots a_{m-3}=b_0=\cdots =b_{m-3}=0$,we can classify the value of $a_{m-2},a_{m-1},a_{m},b_{m-2},b_{m-1},b_{m}$ into the following four cases:\\
		\textbf{Case 1}:If $a_{m-2}=a_{m-1}=b_{m-2}=b_{m-1}=0,a_{m}=b_{m}=v$ and the corresonding $m_i$ and $m_j$ is $m_i=x_m^v,m_j=x_m^v$.In this case, $\sum_{p \in \mathcal{P}^{\prime}_m}\left(m_i m_j\right)(p) =1$.\\
		\textbf{Case 2}:If $a_{m-2}+b_{m-2}=0$, $a_{m-1}+b_{m-1}\ne 0$,then 
		$$
		\sum_{p \in \mathcal{P}^{\prime}_m}\left(m_i m_j\right)(p) =\left(\sum_{\alpha_{m-1} \in \mathbb{F}_q} \alpha_{m-1}^{a_{m-1}+b_{m-1}}\right)\left(\sum_{\alpha_n \in \mathbb{F}_q} \alpha_m^{a_m+b_m}\right)+\left(\sum_{\alpha_m \in \mathbb{F}_q} \alpha_m^{a_m+b_m}\right)
		$$
		\textbf{(a)}.If $a_{m-1}+b_{m-1}\equiv 0 \mod(q-1)$,which means $a_{m-1}+b_{m-1}=q-1$ or $a_{m-1}+b_{m-1}=2(q-1)$,then
		$$
		\sum_{p \in \mathcal{P}^{\prime}_m}\left(m_i m_j\right)(p) =-\left(\sum_{\alpha_m \in \mathbb{F}_q} \alpha_m^{a_m+b_m}\right)+\left(\sum_{\alpha_m \in \mathbb{F}_q} \alpha_m^{a_m+b_m}\right)=0
		$$
		In this case, the corresponding $m_i$ and $m_j$ have no influence on our analysis of the matrix $G'(G')^T$.\\ 
		\textbf{(b)}.If $a_{m-1}+b_{m-1}\not\equiv 0 \pmod{q-1}$,then
		$$
		\sum_{p \in \mathcal{P}^{\prime}_m}\left(m_i m_j\right)(p) =\left(\sum_{\alpha_m \in \mathbb{F}_q} \alpha_m^{a_m+b_m}\right)
		$$
		which means the only possible nonzero value of $\sum_{p \in \mathcal{P}^{\prime}_m}\left(m_i m_j\right)(p)$ is $-1$.In order to make it archive this value,we can see that $a_m+b_m$ must be positive and $a_m+b_m \equiv 0 \mod (q-1)$.Consequencely,we have the following possible situations.\\
		\circled{1}$a_m+b_m =q-1$,$a_{m-1}+b_{m-1}=2v-(q-1)$
		,the corresponding $m_i$ and $m_j$ is 
		$$
		m_i=x_{m-1}^{v-a_m}x_m^{a_m}
		$$
		$$
		m_j=x_{m-1}^{v-(q-1)+a_m}x_m^{q-1-a_m}
		$$
		$$
		(0\le a_m \le q-1)
		$$
		In this situation,there exist $q$ nonzero entries -1 in the matrix $G'(G')^T$.\\
		\circled{2}$a_m+b_m =2(q-1)$,$a_{m-1}+b_{m-1}=2v-2(q-1)$
		,the corresponding $m_i$ and $m_j$ is 
		$$
		m_i=x_{m-1}^{v-a_m}x_m^{a_m}
		$$
		$$
		m_j=x_{m-1}^{v-2(q-1)+a_m}x_m^{2(q-1)-a_m}
		$$
		$$
		(2(q-1)-v\le a_m \le v)
		$$
		In this situation,there exist $2v-2(q-1)+1$ nonzero entries -1 in the matrix $G'(G')^T$.\\
		\textbf{Case 3}:If $a_{m-2}+b_{m-2}\ne0$, $a_{m-1}+b_{m-1}= 0$,then 
		$$
		\sum_{p \in \mathcal{P}^{\prime}_m}\left(m_i m_j\right)(p) =\left(\sum_{\alpha_{m-1} \in \mathbb{F}_q} \alpha_{m-1}^{a_{m-1}+b_{m-1}}\right)\left(\sum_{\alpha_m \in \mathbb{F}_q} \alpha_m^{a_m+b_m}\right)=0
		$$
		In this case, the corresponding $m_i$ and $m_j$ have no influence on our analysis of the matrix $G'(G')^T$.\\
		\textbf{Case 4}:If $a_{m-2}+b_{m-2}\ne0$, $a_{m-1}+b_{m-1}\ne 0$,then 
		$$
		\sum_{p \in \mathcal{P}^{\prime}_m}\left(m_i m_j\right)(p) =\left(\sum_{\alpha_{m-1} \in \mathbb{F}_q} \alpha_{m-1}^{a_{m-1}+b_{m-1}}\right)\left(\sum_{\alpha_m \in \mathbb{F}_q} \alpha_m^{a_m+b_m}\right)
		$$
		In order to make the summation above is nonzero,we must have
		$$
		\begin{cases}
			a_{m-2}+b_{m-2}&=2v-2(q-1)\\
			a_{m-1}+b_{m-1}&= q-1\\
			a_m+b_m&=q-1
		\end{cases}
		$$
		The corresponding $m_i$ and $m_j$ are
		$$
		\begin{cases}
			m_i&=x_{m-2}^{a_{m-2}}x_{m-1}^{v-a_m-a_{m-2}}x_m^{a_m}\\
			m_j&=x_{m-2}^{2v-2(q-1)-a_{m-2}}x_{m-1}^{q-1-v+a_m+a_{m-2}}x_m^{q-1-a_m}
		\end{cases}
		$$
		$a_{m-2}$ and $a_m$ can choose the value as follow
		$$
		0\leqslant a_{m-2} \leqslant v-(q-1),v-(q-1)-a_{m-2} \leqslant a_m \leqslant q-1$$
		or
		$$ v-(q-1)+1\leqslant a_{m-2} \leqslant 2v-2(q-1),0 \leqslant a_m \leqslant v-a_{m-2}
		$$
		In this situation,there exist $\left[(3q-v-1)(v-q+2)-q\right]$ nonzero entries 1 in the matrix $GG^T$.\\
		From the analysis of the values and positions of the entries in the matrix $G'(G')^T$,we find that all -1 can
		be eliminated by elementary row operation or elementary column operation.Hence only the entries with value 1 have contributions to the rank of $G'(G')^T$.We claim that $Rank(G'(G')^T)=\left[(3q-v-1)(v-q+2)-q+1\right]$ . We use the Lemma \ref{lem-parameters-2} and Proposition \ref{prop-hull} and get the Theorem 3.1. By exclude the monomials in \textbf{Case 1} and \textbf{Case 4}, a basis for $\operatorname{Hull}\left(PRM(q,m,v)\right)$ is given by $\{\operatorname{ev}(f)\}_{f \in \mathcal{M}}$ where $\mathcal{M}$ is the set of monomials of degree $v$ in $x_0, x_1, \ldots, x_m$ except for the $(3q-v-1)(v-q+2)-q+1$monomials 
		$x_m^v$ and
		$x_{m-2}^{a_{m-2}}x_{m-1}^{v-a_m-a_{m-2}}x_m^{a_m}$ where 
		$0\leqslant a_{m-2} \leqslant v-(q-1)$,
		$v-(q-1)-a_{m-2} \leqslant a_m \leqslant q-1$
		or\quad$v-(q-1)+1\leqslant a_{m-2} \leqslant 2v-2(q-1)$,
		$0 \leqslant a_m \leqslant v-a_{m-2}$.
	\end{proof}

\begin{example}
	We use magma to examine $Hull(PRM(11,3,14))$,the original magma codes is followed.
	\begin{lstlisting}
		 %calculate the dimension of Hull(PRM(11,3,14)) directly
		 q:=11;m:=3;v:=14;
		 F := GF(q);
		 P3 := ProjectiveSpace(F, m);
		 points := Points(P3);
		 R<x_1, x_2, x_3,x_4> := PolynomialRing(F, m+1);
		 polynomials := [R!m : m in SetToSequence(MonomialsOfDegree(R, v)) ];
		 codewords := [];
		 for poly in polynomials do
		 codeword := [];
		 for pt in points do
		 
		 Append(~codeword, Evaluate(poly, Eltseq(pt)));
		 end for;
		 Append(~codewords, Vector(F, codeword));
		 end for;
		 G := Matrix(codewords);
		 linear_code := LinearCode(G);
		 v_mod_q:=v mod (q-1);
		 s_set:={s_1:s_1 in [v_mod_q..v by (q-1)]};
		 dim_prm_qnk:=&+[&+[ (-1)^j*Binomial(m+1, j)*Binomial(t-q*j+m, t-j*q) : j in [0..m+1] ]:t in s_set];
		 dim_prm_qnk-Rank(G*Transpose(G));
		 555
		
		%calculate the dimension of Hull(PRM(11,3,14)) by our formula
		dim_prm_qnk-((3*q-v-1)*(v-q+2)-q+1);
		555 
	\end{lstlisting}
	
\end{example}

\begin{cor}\label{cor-k_to_3/2(q-1)}
	Suppose that $ m\ge 3, m(q-1)- \frac{3(q-1)}{2}< v < (m-1)(q-1)$ and let $\ell = m(q-1) - v$. Then 
	\[
	\dim(\Hull(PRM(q,m,v))) = \dim (\Hull(PRM(q,m,\ell)))=k_2- \left[(3q-\ell-1)(\ell-q+2)-q+1\right] .
	\]
	where $k_2=\operatorname{dim}\left(PRM(q,m,\ell)\right)=\sum_{\substack{t=\ell(\bmod q-1) \\
			0<t \leqslant \ell}}\left(\sum_{j=0}^{m+1}(-1)^j\binom{m+1}{j}\binom{t-j q+m}{t-j q}\right)$
\end{cor}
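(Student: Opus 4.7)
The plan is to reduce the statement to Theorem \ref{thm-k_to_3/2(q-1)} via the duality result from Theorem \ref{thm-Sor}, using the elementary fact that $\Hull(C) = \Hull(C^\perp)$ for any linear code $C$. So the whole argument is a one-line translation once the range on $\ell$ is verified.

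First, I would unpack the hypothesis on $v$. The substitution $\ell = m(q-1) - v$ converts $m(q-1) - \frac{3(q-1)}{2} < v < (m-1)(q-1)$ into $q-1 < \ell < \frac{3(q-1)}{2}$. In particular, $\ell$ lies strictly between the two consecutive multiples $q-1$ and $2(q-1)$ of $q-1$, so $\ell \not\equiv 0 \pmod{q-1}$, and consequently $v = m(q-1) - \ell \not\equiv 0 \pmod{q-1}$ as well. Together with $m\ge 3$, this places $\ell$ precisely in the range covered by Theorem \ref{thm-k_to_3/2(q-1)}.

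Next, since $v \not\equiv 0 \pmod{q-1}$, case (i) of Theorem \ref{thm-Sor} gives $PRM(q,m,v)^{\perp} = PRM(q,m,\ell)$. I would then apply the identity
$$\Hull(C) \;=\; C \cap C^\perp \;=\; C^\perp \cap (C^\perp)^\perp \;=\; \Hull(C^\perp),$$
valid for every linear code $C$, with $C = PRM(q,m,v)$. This immediately yields $\dim(\Hull(PRM(q,m,v))) = \dim(\Hull(PRM(q,m,\ell)))$, which is the first equality in the corollary.

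For the second equality, I would simply invoke Theorem \ref{thm-k_to_3/2(q-1)} with the parameter $\ell$ in place of $v$, which produces $\dim(\Hull(PRM(q,m,\ell))) = k_2 - [(3q-\ell-1)(\ell-q+2) - q + 1]$, where $k_2 = \dim(PRM(q,m,\ell))$ is given by the formula in Lemma \ref{lem-parameters-2}. There is no genuine obstacle here: the only care needed is in verifying the strict inequalities and the non-divisibility by $q-1$, which is what licenses both the application of Theorem \ref{thm-Sor}(i) (rather than the $\Spa_{\F_q}\{\mathbf{1}, PRM(q,m,\ell)\}$ of part (ii)) and of Theorem \ref{thm-k_to_3/2(q-1)}.
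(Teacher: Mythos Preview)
Your proposal is correct and follows essentially the same approach as the paper: translate the range on $v$ into $q-1<\ell<\frac{3(q-1)}{2}$, invoke Theorem~\ref{thm-Sor}(i) to identify $PRM(q,m,v)^\perp$ with $PRM(q,m,\ell)$, and then apply Theorem~\ref{thm-k_to_3/2(q-1)} to $\ell$. Your write-up is in fact more careful than the paper's, which omits the explicit check that $v\not\equiv 0\pmod{q-1}$ and contains what appears to be a typo (it cites Theorem~\ref{thm-k_to_qm1} rather than Theorem~\ref{thm-k_to_3/2(q-1)}).
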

\begin{example}
	We use magma to examine $Hull(PRM(11,3,17))$,the original magma codes is followed.
	\begin{lstlisting}
		%calculate the dimension of Hull(PRM(11,3,17)) directly
		q:=11;m:=3;v:=17;
		F := GF(q);
		P3 := ProjectiveSpace(F, m);
		points := Points(P3);
		R<x_1, x_2, x_3,x_4> := PolynomialRing(F, m+1);
		polynomials := [R!m : m in SetToSequence(MonomialsOfDegree(R, v)) ];
		codewords := [];
		for poly in polynomials do
		codeword := [];
		for pt in points do
		
		Append(~codeword, Evaluate(poly, Eltseq(pt)));
		end for;
		Append(~codewords, Vector(F, codeword));
		end for;
		G := Matrix(codewords);
		linear_code := LinearCode(G);
		v_mod_q:=v mod (q-1);
		s_set:={s_1:s_1 in [v_mod_q..v by (q-1)]};
		dim_prm_qnk:=&+[&+[ (-1)^j*Binomial(m+1, j)*Binomial(t-q*j+m, t-j*q) : j in [0..m+1] ]:t in s_set];
		dim_prm_qnk-Rank(G*Transpose(G));
		474
		
		%calculate the dimension of Hull(PRM(11,3,17)) by our formula
		l:=m*(q-1)-v;
		l_mod_q:=l mod (q-1);
		t_set:={t_1:t_1 in [l_mod_q..l by (q-1)]};
		dim_prm_qml:=&+[&+[ (-1)^j*Binomial(m+1, j)*Binomial(t-q*j+m, t-j*q) : j in [0..m+1] ]:t in t_set];
		dim_prm_qml-((3*q-l-1)*(l-q+2)-q+1);
		474
		
		
		 
	\end{lstlisting}
	
\end{example}
\begin{proof}
	By the assumption on $v$, Theorem \ref{thm-Sor}(i) implies that ${PRM(q,m,v)}^\perp = PRM(q,m,\ell)$.  We see that $q-1 < \ell < \frac{3(q-1)}{2}$.  Applying Theorem \ref{thm-k_to_qm1} completes the proof.
\end{proof}
 \begin{theorem}\label{th_k_to_2(q-1)}
 	Let $m \ge 4, \frac{3(q-1)}{2}<v<2(q-1)$. Then
 	$$
 	\operatorname{dim}\left(\operatorname{Hull}\left(PRM(q,m,v)\right)\right)=k_1-\left[\sum_{i=3(q-1)-v}^{v}\sum_{j=0}^{3}(-1)^j\binom{3}{j}\binom{i-qj+2}{i-qj}+[2v-3(q-1)+1]\right]; .
 	$$
 	where $k_1=\operatorname{dim}\left(PRM(q,m,v)\right)=\sum_{\substack{t=v(\bmod q-1) \\
 			0<t \leqslant v}}\left(\sum_{j=0}^{m+1}(-1)^j\binom{m+1}{j}\binom{t-j q+m}{t-j q}\right).$\\
 		Moreover,a basis for $\operatorname{Hull}\left(PRM(q,m,v)\right)$ is given by $\{\operatorname{ev}(f)\}_{f \in \mathcal{M}}$ where $\mathcal{M}$ is the set of monomials of degree $v$ in $x_0, x_1, \ldots, x_m$ except for the $\left[\sum_{i=3(q-1)-v}^{v}\sum_{j=0}^{3}(-1)^j\binom{3}{j}\binom{i-qj+2}{i-qj}+[2v-3(q-1)+1]\right]$ polynomials:
 		$$
 		\begin{cases}
 			m_i&=x_{m-1}^{v-a_m}x_m^{a_m},(3(q-1)-v \le a_m \le v)\\
 			m_i&=x_{m-3}^{v-a_{m-2}-a_{m-1}-a_m}x_{m-2}^{a_{m-2}}x_{m-1}^{a_{m-1}}x_m^{a_m},\begin{cases}
 				a_{m-\mu} \le q-1,(\mu=0,1,2)\\
 				3(q-1)-v \le a_{m-2}+a_{m-1}+a_m \le v
 			\end{cases}
 			
 		\end{cases}.
 		$$
 	
 \end{theorem}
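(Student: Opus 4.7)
The plan is to adapt the argument of Theorem \ref{thm-k_to_3/2(q-1)} to the enlarged range, handling the new complication that $2v \in (3(q-1),4(q-1))$ now permits up to three indices with $a_\ell+b_\ell$ equal to a positive multiple of $q-1$. Using Lemma \ref{lem-parameters-2} for $k_1$, I take the generator matrix $G$ whose rows are $\ev(m_1),\ldots,\ev(m_{k_1})$ for the reduced monomials of degree $v$, and pass to the larger matrix $G'$ whose rows are $\ev(m)$ for every monomial $m$ of degree $v$ in $x_0,\ldots,x_m$. By Lemma \ref{matrix}, $\rank(GG^T)=\rank(G'(G')^T)$. As in the previous proof, I write each entry of $G'(G')^T$ as $\sum_{\mu=0}^{m}T_\mu$ with
$$T_\mu = \Bigl(\prod_{\ell<\mu}0^{a_\ell+b_\ell}\Bigr)\,1^{a_\mu+b_\mu}\prod_{\ell>\mu}\Bigl(\sum_{\alpha\in\F_q}\alpha^{a_\ell+b_\ell}\Bigr),$$
and use equation (\ref{eq-sum}) (noting $\sum_\alpha\alpha^0 = q = 0$ in $\F_q$) to see that each factor in the product over $\ell>\mu$ equals $-1$ exactly when $a_\ell+b_\ell>0$ is divisible by $q-1$, and vanishes otherwise. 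Since $2v<4(q-1)$ and each positive contribution is at least $q-1$, at most three factors are nonzero; consequently $T_\mu=0$ for $\mu\leq m-4$, so nonzero entries of $G'(G')^T$ lie only between monomials in $\F_q[x_{m-3},x_{m-2},x_{m-1},x_m]$.

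The next step is the case analysis. For $\mu=m$ (Case A): $T_m=1$ exactly when $m_i=m_j=x_m^v$, and $T_{m-1}$ fails for this pair since $(q-1)\nmid 2v$ in our range, so the diagonal entry at $x_m^v$ is exactly $1$. For $\mu=m-1$ (Case B): $T_{m-1}=-1$ for two-variable pairs with $a_m+b_m\in\{q-1,2(q-1),3(q-1)\}$. For $\mu=m-2$ (Case C): $T_{m-2}=+1$ for three-variable pairs with both $a_{m-1}+b_{m-1}$ and $a_m+b_m$ positive multiples of $q-1$; the bound $2v<4(q-1)$ leaves exactly the three sub-cases $(q-1,q-1)$, $(q-1,2(q-1))$ and $(2(q-1),q-1)$, forcing $a_{m-2}+b_{m-2}\in\{2v-2(q-1),2v-3(q-1)\}$. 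For $\mu=m-3$ (Case D): $T_{m-3}=-1$ for four-variable pairs with $a_\ell+b_\ell=q-1$ for $\ell=m-2,m-1,m$ (the sum of three positive multiples of $q-1$ cannot exceed $2v<4(q-1)$, so each equals $q-1$) and $a_{m-3}+b_{m-3}=2v-3(q-1)$.

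Now I would argue, paralleling the previous theorem, that the $-1$ entries from Cases B and D can be eliminated by elementary row/column operations, leaving only the $+1$ entries of Cases A and C. The degree-preserving involution
$(a_{m-3},a_{m-2},a_{m-1},a_m)\mapsto(2v-3(q-1)-a_{m-3},q-1-a_{m-2},q-1-a_{m-1},q-1-a_m)$
pairs each monomial in the resulting support $S$ with a unique partner, so the reduced matrix is a signed permutation on $S$ and $\rank(G'(G')^T)=|S|$. Listing the rows of $S$: the two-variable monomials $x_{m-1}^{v-a_m}x_m^{a_m}$ with $3(q-1)-v\leq a_m\leq v$ contribute $2v-3(q-1)+1$ elements (including $x_m^v$), while the (up to four-variable) monomials $x_{m-3}^{v-i}x_{m-2}^{a_{m-2}}x_{m-1}^{a_{m-1}}x_m^{a_m}$ with $i=a_{m-2}+a_{m-1}+a_m\in[3(q-1)-v,v]$ and each $a_{m-\mu}\leq q-1$ for $\mu=0,1,2$ contribute $\sum_{i=3(q-1)-v}^{v}\sum_{j=0}^{3}(-1)^j\binom{3}{j}\binom{i-qj+2}{i-qj}$ elements by a standard inclusion--exclusion on the constraints $a_{m-\mu}\leq q-1$. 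Summing and invoking Proposition \ref{prop-hull} yields the stated formula, and the claimed basis for $\Hull(PRM(q,m,v))$ consists of the evaluations of the monomials of degree $v$ not in $S$.

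The main obstacle is the elimination-and-pairing argument in the third paragraph. Because $G'(G')^T$ is not block-diagonal (a two-variable row may carry both Case B $-1$ entries and Case C $+1$ entries, while four-variable rows carry Case D $-1$'s together with Case C $+1$'s arising from the $(q-1,2(q-1))$ and $(2(q-1),q-1)$ sub-cases), one must verify carefully that the sequence of row/column operations used to clear each $-1$ is internally consistent and leaves the Case A and Case C $+1$'s untouched. In particular, it must be checked that the Case D involution maps the excluded set (ii) to itself and that its orbits align correctly with the Case C pairings on the boundary $i=v$ where the 4-variable picture degenerates to a 3-variable one. Once this structural claim is established, the inclusion--exclusion count and the final appeal to Proposition \ref{prop-hull} are routine.
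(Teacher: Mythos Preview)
Your setup and case analysis track the paper's closely, and your four cases A--D (indexed by which $T_\mu$ is nonzero) are in fact cleaner than the paper's eight cases indexed by the vanishing pattern of $a_\ell+b_\ell$; you also arrive at the correct excluded set $S$ and the correct dimension formula.

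The genuine gap is that your elimination step runs in the wrong direction. You write that, ``paralleling the previous theorem,'' the $-1$ entries of Cases B and D can be eliminated, leaving only the $+1$ entries of A and C. The paper asserts the opposite: only the third sub-situation of its Case 4 (your B with $a_m+b_m=3(q-1)$) and its Case 8 (your D) contribute to the rank, and the set $S$ you then list is precisely that B(iii)\,$\cup$\,D monomial set---not the A\,$\cup$\,C set your stated elimination would produce. To see your direction cannot work, take any $m_i$ with $a_{m-3}>0$ and $a_{m-\mu}\le q-1$ for $\mu=0,1,2$; this monomial lies in the second family of your $S$. Since $a_{m-3}+b_{m-3}>0$ for every $m_j$, each $T_\mu$ with $\mu\ge m-2$ carries the factor $0^{a_{m-3}+b_{m-3}}=0$, so the entire row of $G'(G')^T$ at $m_i$ consists of the single Case D entry $-1$. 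If B and D were eliminated as you propose, this row would become zero and $m_i$ could not belong to the support---contradicting your own listing of $S$. What actually parallels Theorem \ref{thm-k_to_3/2(q-1)} is not ``keep the $+1$'s'' but ``keep the outermost layer'': there the outermost (three-variable) layer had sign $(-1)^2=+1$, whereas here the outermost (four-variable) layer has sign $(-1)^3=-1$. The involution you display is the Case D pairing, and together with the B(iii) pairing on the two-variable block it does exhibit a full-rank signed-permutation submatrix on the rows indexed by $S$; the residual task---showing that rows outside $S$ are combinations of $S$-rows---is not justified in the paper either, so on that final point you and the paper are equally incomplete, but your proposed cancellation scheme is inverted.
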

\begin{proof}
	Suppose $m_i=x_0^{a_0} \cdots x_m^{a_m}$ and $m_j=x_0^{b_0} \cdots x_m^{b_m}$ where $0 \leqslant a_0, \ldots, a_m, b_0, \ldots, b_m \leqslant v$ and $\sum_{\ell=0}^m a_{\ell}=\sum_{\ell=0}^m b_{\ell}=v$. Following the reasoning from the proof of \ref{thm-k_to_3/2(q-1)}, we see that the $(i, j)$-entry of $G' (G')^T$ is given by
	$$
	\sum_{p \in \mathcal{P}^{\prime}}\left(m_i m_j\right)(p)=\sum_{i=0}^m\left(\left(0^{a_0+b_0} \cdots 0^{a_{i-1}+b_{i-1}}\right) 1^{a_i+b_i}\left(\sum_{\alpha_{i+1} \in \mathbb{F}_q} \alpha_{i+1}^{a_{i+1}+b_{i+1}}\right) \cdots\left(\sum_{\alpha_m \in \mathbb{F}_q} \alpha_m^{a_m+b_m}\right)\right) .
	$$
	
	Equation above implies that $\sum_{\alpha \in \mathbb{F}_q} \alpha^{a+b}=0$ unless $a+b$ is positive and $a+b \equiv 0(\bmod q-1)$. In particular, this is zero unless $a+b \geqslant q-1$. Since $3(q-1)<2 v<4(q-1)$, there is at most three $i$ s for which $a_i+b_i$ is be both positive and divisible by $q-1$. This implies that for any $i \leqslant m-4$,
	$$
	\left(0^{a_0+b_0} \cdots 0^{a_{i-1}+b_{i-1}}\right) 1^{a_i+b_i}\left(\sum_{\alpha_{i+1} \in \mathbb{F}_q} \alpha_{i+1}^{a_{i+1}+b_{i+1}}\right) \cdots\left(\sum_{\alpha_m \in \mathbb{F}_q} \alpha_m^{a_m+b_m}\right)=0 .
	$$
	
	We see that
	\begin{equation}\begin{split}
			&\sum_{p \in \mathcal{P}^{\prime}_m}\left(m_i m_j\right)(p)\\
		   =&\left(0^{a_0+b_0} \cdots 0^{a_{m-4}+b_{m-4}}\right) 1^{a_{m-3}+b_{m-3}}\left(\sum_{\alpha_{m-2} \in \mathbb{F}_q} \alpha_{m-2}^{a_{m-2}+b_{m-2}}\right)\left(\sum_{\alpha_{m-1} \in \mathbb{F}_q} \alpha_{m-1}^{a_{m-1}+b_{m-1}}\right)\left(\sum_{\alpha_m \in \mathbb{F}_q} \alpha_m^{a_m+b_m}\right)\\
		   +& \left(0^{a_0+b_0} \cdots 0^{a_{m-3}+b_{m-3}}\right) 1^{a_{m-2}+b_{m-2}}\left(\sum_{\alpha_{m-1} \in \mathbb{F}_q} \alpha_{m-1}^{a_{m-1}+b_{m-1}}\right)\left(\sum_{\alpha_m \in \mathbb{F}_q} \alpha_m^{a_m+b_m}\right)\\
		   +& \left(0^{a_0+b_0} \cdots 0^{a_{m-2}+b_{m-2}}\right) 1^{a_{m-1}+b_{m-1}}\left(\sum_{\alpha_m \in \mathbb{F}_q} \alpha_m^{a_m+b_m}\right)+0^{a_0+b_0} \cdots 0^{a_{m-1}+b_{m-1}} 1^{a_m+b_m}
		\end{split}		
	\end{equation}
	The necessary condition for $\sum_{p \in \mathcal{P}^{\prime}_m}\left(m_i m_j\right)(p) \ne 0$ is that $a_0=\cdots a_{m-4}=b_0=\cdots =b_{m-4}=0$,we can classify the value of $a_{m-3},a_{m-2},a_{m-1},a_{m},b_{m-3},b_{m-2},b_{m-1},b_{m}$ into the following eight cases:\\
	\textbf{Case 1}:If $a_{m-3}+b_{m-3}=a_{m-2}+b_{m-2}=a_{m-1}+b_{m-1}=0$, i.e $a_{m-3}=b_{m-3}=a_{m-2}=b_{m-2}=a_{m-1}=b_{m-1}=0$.The corresponding $m_i$ and $m_j$ are
	$$
	\begin{cases}
		m_i&=x_m^v\\
		m_j&=x_m^v
	\end{cases}
	$$
	In this case,$\sum_{p \in \mathcal{P}^{\prime}_m}\left(m_i m_j\right)(p)=1$.\\
	\textbf{Case 2}:If $a_{m-3}+b_{m-3}\ne 0,a_{m-2}+b_{m-2}=a_{m-1}+b_{m-1}=0$, i.e 
	
	$$
			\scriptstyle\sum_{p \in \mathcal{P}^{\prime}_m}\left(m_i m_j\right)(p) =\left(0^{a_0+b_0} \cdots 0^{a_{m-4}+b_{m-4}}\right) 1^{a_{m-3}+b_{m-3}}\left(\sum_{\alpha_{m-2} \in \mathbb{F}_q} \alpha_{m-2}^{a_{m-2}+b_{m-2}}\right)\left(\sum_{\alpha_{m-1} \in \mathbb{F}_q} \alpha_{m-1}^{a_{m-1}+b_{m-1}}\right)\left(\sum_{\alpha_m \in \mathbb{F}_q} \alpha_m^{a_m+b_m}\right)\\
		=0
		$$
	In this case, the corresponding $m_i$ and $m_j$ have no influence on our analysis of the matrix $G'(G')^T$.\\
	\textbf{Case 3}:If $a_{m-3}+b_{m-3}= 0,a_{m-2}+b_{m-2}\ne 0,a_{m-1}+b_{m-1}=0$, i.e 
	
	\begin{equation}\begin{split}
			&\sum_{p \in \mathcal{P}^{\prime}_m}\left(m_i m_j\right)(p)\\
			=&\left(0^{a_0+b_0} \cdots 0^{a_{m-4}+b_{m-4}}\right) 1^{a_{m-3}+b_{m-3}}\left(\sum_{\alpha_{m-2} \in \mathbb{F}_q} \alpha_{m-2}^{a_{m-2}+b_{m-2}}\right)\left(\sum_{\alpha_{m-1} \in \mathbb{F}_q} \alpha_{m-1}^{a_{m-1}+b_{m-1}}\right)\left(\sum_{\alpha_m \in \mathbb{F}_q} \alpha_m^{a_m+b_m}\right)\\
			+& \left(0^{a_0+b_0} \cdots 0^{a_{m-3}+b_{m-3}}\right) 1^{a_{m-2}+b_{m-2}}\left(\sum_{\alpha_{m-1} \in \mathbb{F}_q} \alpha_{m-1}^{a_{m-1}+b_{m-1}}\right)\left(\sum_{\alpha_m \in \mathbb{F}_q} \alpha_m^{a_m+b_m}\right)=0
		\end{split}		
	\end{equation}
	In this case, the corresponding $m_i$ and $m_j$ have no influence on our analysis of the matrix $G'(G')^T$.\\
	\textbf{Case 4}:If $a_{m-3}+b_{m-3}= 0,a_{m-2}+b_{m-2}= 0,a_{m-1}+b_{m-1}\ne 0$, i.e 
	
	\begin{equation}\begin{split}
			&\sum_{p \in \mathcal{P}^{\prime}_m}\left(m_i m_j\right)(p)\\
			=&\left(0^{a_0+b_0} \cdots 0^{a_{m-3}+b_{m-3}}\right) 1^{a_{m-2}+b_{m-2}}\left(\sum_{\alpha_{m-1} \in \mathbb{F}_q} \alpha_{m-1}^{a_{m-1}+b_{m-1}}\right)\left(\sum_{\alpha_m \in \mathbb{F}_q} \alpha_m^{a_m+b_m}\right)\\
			+& \left(0^{a_0+b_0} \cdots 0^{a_{m-2}+b_{m-2}}\right) 1^{a_{m-1}+b_{m-1}}\left(\sum_{\alpha_m \in \mathbb{F}_q} \alpha_m^{a_m+b_m}\right)
		\end{split}		
	\end{equation}
	If $a_{m-1}+b_{m-1}\equiv 0\mod(q-1)$,i.e $a_{m-1}+b_{m-1}=q-1,2(q-1)$,we have that$\sum_{p \in \mathcal{P}^{\prime}_m}\left(m_i m_j\right)(p)=0$,which has no influence on our analysis of the matrix $G'(G')^T$. \\
	If $a_{m-1}+b_{m-1}\not\equiv 0\mod(q-1)$,i.e $a_{m-1}+b_{m-1}\ne q-1,2(q-1)$,we have that$\sum_{p \in \mathcal{P}^{\prime}}\left(m_i m_j\right)(p)=\left(0^{a_0+b_0} \cdots 0^{a_{m-2}+b_{m-2}}\right) 1^{a_{m-1}+b_{m-1}}\left(\sum_{\alpha_m \in \mathbb{F}_q} \alpha_m^{a_m+b_m}\right)$.In order to make the summation above is -1,we must have
	$$
	\begin{cases}
		
		a_{m-1}+b_{m-1}&= 2v-2(q-1)\\
		a_m+b_m&=q-1
	\end{cases}
	$$ or
	$$
	\begin{cases}
		
		a_{m-1}+b_{m-1}&= 2v-2(q-1)\\
		a_m+b_m&=2(q-1)
	\end{cases}
	$$ or
	$$
	\begin{cases}
		
		a_{m-1}+b_{m-1}&= 2v-3(q-1)\\
		a_m+b_m&=2(q-1)
	\end{cases}
	$$
	The corresponding $m_i$ and $m_j$ are
	$$
	\begin{cases}
		m_i&=x_{m-1}^{v-a_m}x_m^{a_m}\\
		m_j&=x_{m-1}^{v-2(q-1)+a_m}x_m^{q-1-a_m}\\
		&(0\le a_m \le q-1)
	\end{cases}
	$$ or
	$$
	\begin{cases}
		m_i&=x_{m-1}^{v-a_m}x_m^{a_m}\\
		m_j&=x_{m-1}^{v-2(q-1)+a_m}x_m^{2(q-1)-a_m}\\
		&(2(q-1)-v \le a_m \le v)
	\end{cases}
	$$ or
	$$
	\begin{cases}
		m_i&=x_{m-1}^{v-a_m}x_m^{a_m}\\
		m_j&=x_{m-1}^{v-3(q-1)+a_m}x_m^{2(q-1)-a_m}\\
		&(3(q-1)-v \le a_m \le v)
	\end{cases}
	$$. \\
	Hence there are totally $q+[2v-2(q-1)+1]+[2v-3(q-1)+1]=4v-4q+7$ entries in $G'(G')^T$ has value -1.\\
	\textbf{Case 5}:If $a_{m-3}+b_{m-3}\ne 0,a_{m-2}+b_{m-2}\ne 0,a_{m-1}+b_{m-1}= 0$, i.e $\sum_{p \in \mathcal{P}^{\prime}_m}\left(m_i m_j\right)(p)=0^{a_0+b_0} \cdots 0^{a_{m-1}+b_{m-1}} 1^{a_m+b_m}$,and in order to make it nonzero,we have following possible choices:\\
	\textbf{(a)}
	$$
	\begin{cases}
		a_m=0,b_m= 0\\
		a_{m-3}+b_{m-3}\ne 0,a_{m-2}+b_{m-2}\ne 0
	\end{cases}
	$$
	There exists $(v+1)^2$ pairs of $m_i$ and $m_j$ 
	$$
	\begin{cases}
		m_i&=x_{m-3}^{v-a_{m-2}}x_{m-2}^{a_{m-2}},(0\le a_{m-2}\le v)\\
		m_j&=x_{m-3}^{v-b_{m-2}}x_{m-2}^{b_{m-2}},(0\le b_{m-2}\le v)
	\end{cases}
	$$
	because
	$$\begin{cases}
		[v-a_{m-2}]+[v-b_{m-2}] \ne 0\\
		a_{m-2}+b_{m-2}\ne 0
	\end{cases}$$	
	Hence the choice for $m_i$ and $m_j$ must eliminate
	$$\begin{cases}
		m_i&= x_{m-2}^v\\
		m_j&= x_{m-2}^v\\
	\end{cases}	$$and
    $$\begin{cases}
	m_i&= x_{m-3}^v\\
	m_j&= x_{m-3}^v\\
    \end{cases}	$$
    Therefore,only $[(v+1)^2-2]$ pairs $m_i$ and $m_j$ satisfy $\sum_{p \in \mathcal{P}^{\prime}_m}\left(m_i m_j\right)(p)=1$.\\
	\textbf{(b)}
	$$
	\begin{cases}
		a_m=0,b_m\ne 0\\
		a_{m-3}+b_{m-3}\ne 0,a_{m-2}+b_{m-2}\ne 0
	\end{cases}
    $$
    There exists $\frac{v(v+1)^2}{2}$ pairs of $m_i$ and $m_j$
    $$
    \begin{cases}
    	m_i&=x_{m-3}^{v-a_{m-2}}x_{m-2}^{a_{m-2}}\\
    	m_j&=x_{m-3}^{v-b_{m-2}-{b_m}}x_{m-2}^{b_{m-2}}x_m^{b_m}
    \end{cases}
    $$
    because
    $$\begin{cases}
    	[v-a_{m-2}]+[v-b_{m-2}-{b_m}] \ne 0\\
    	a_{m-2}+b_{m-2}\ne 0
    \end{cases}$$	
    Hence the choice for $m_i$ and $m_j$ must eliminate the $v$ pairs
    $$\begin{cases}
    	m_i&= x_{m-2}^v\\
    	m_j&= x_{m-2}^{v-b_m}x_m^{b_m},(1 \le b_m \le v)\\
    \end{cases}	$$
    Therefore,only $\frac{v(v+1)^2}{2}-v$ pairs $m_i$ and $m_j$ satisfy $\sum_{p \in \mathcal{P}^{\prime}}\left(m_i m_j\right)(p)=1$.\\
    \textbf{(c)}$$
    \begin{cases}
    	a_m\ne 0,b_m= 0\\
    	a_{m-3}+b_{m-3}\ne 0,a_{m-2}+b_{m-2}\ne 0
    \end{cases}
    $$
    Similiar to (b),we can get that there exists $\frac{v(v+1)^2}{2}$ pairs of $m_i$ and $m_j$
    $$
    \begin{cases}
    	m_i&=x_{m-3}^{v-a_{m-2}-{a_m}}x_{m-2}^{a_{m-2}}x_m^{a_m}\\
    	m_j&=x_{m-3}^{v-b_{m-2}}x_{m-2}^{b_{m-2}}
    \end{cases}
    $$
    because
    $$\begin{cases}
    	[v-b_{m-2}]+[v-a_{m-2}-{a_m}] \ne 0\\
    	a_{m-2}+b_{m-2}\ne 0
    \end{cases}$$	
    Hence the choice for $m_i$ and $m_j$ must eliminate the $v$ pairs
    $$\begin{cases}
    	m_i&= x_{m-2}^{v-a_m}x_m^{a_m},(1 \le a_m \le v)\\
    	m_j&= x_{m-2}^v\\
    \end{cases}	$$
    Therefore,only $\frac{v(v+1)^2}{2}-v$ pairs $m_i$ and $m_j$ satisfy $\sum_{p \in \mathcal{P}^{\prime}_m}\left(m_i m_j\right)(p)=1$.\\
    \textbf{(d)}$$
    \begin{cases}
    	a_m\ne 0,b_m\ne 0\\
    	a_{m-3}+b_{m-3}\ne 0,a_{m-2}+b_{m-2}\ne 0
    \end{cases}
    $$
    Similiar to (c),we can get that there exists $\frac{v^2(v+1)^2}{4}$ pairs of $m_i$ and $m_j$
    $$
    \begin{cases}
    	m_i&=x_{m-3}^{v-a_{m-2}-{a_m}}x_{m-2}^{a_{m-2}}x_m^{a_m}\\
    	m_j&=x_{m-3}^{v-b_{m-2}-b_m}x_{m-2}^{b_{m-2}}x_m^{b_m}
    \end{cases}
    $$
    because
    $$\begin{cases}
    	[v-a_{m-2}-{a_m}]+[v-b_{m-2}-b_m] \ne 0\\
    	a_{m-2}+b_{m-2}\ne 0
    \end{cases}$$	
    Hence the choice for $m_i$ and $m_j$ must eliminate the $v^2$ pairs
    $$\begin{cases}
    	m_i&= x_{m-2}^{v-a_m}x_m^{a_m},(1 \le a_m \le v)\\
    	m_j&= x_{m-2}^{v-b_m}x_m^{b_m},(1 \le b_m \le v)
    \end{cases}	$$
    Therefore,only $\frac{v^2(v+1)^2}{4}-v^2$ pairs $m_i$ and $m_j$ satisfy $\sum_{p \in \mathcal{P}^{\prime}_m}\left(m_i m_j\right)(p)=1$.\\
    From (a),(b),(c),(d),we can conclude that there exactly $(v+1)^2(v+\frac{v^2}{4})-1$ entries with value 1 in Case 5.\\
    \textbf{Case 6}:If $a_{m-3}+b_{m-3}\ne 0,a_{m-2}+b_{m-2}= 0,a_{m-1}+b_{m-1}\ne 0$,i.e $\sum_{p \in \mathcal{P}^{\prime}_m}\left(m_i m_j\right)(p)=0^{a_0+b_0} \cdots 0^{a_{m-1}+b_{m-1}} 1^{a_m+b_m}$.Similiar to Case 5,we only need to replace all $a_{m-2}$ into $a_{m-1}$ and complete the proof.There exactly $(v+1)^2(v+\frac{v^2}{4})-1$ entries with value 1 in Case 6.\\
    \textbf{Case 7}:If $a_{m-3}+b_{m-3}= 0,a_{m-2}+b_{m-2}\ne 0,a_{m-1}+b_{m-1}\ne 0$,i.e \begin{equation}\begin{split}
    		&\sum_{p \in \mathcal{P}^{\prime}_m}\left(m_i m_j\right)(p)\\
    		=&\left(0^{a_0+b_0} \cdots 0^{a_{m-4}+b_{m-4}}\right) 1^{a_{m-3}+b_{m-3}}\left(\sum_{\alpha_{m-2} \in \mathbb{F}_q} \alpha_{m-2}^{a_{m-2}+b_{m-2}}\right)\left(\sum_{\alpha_{m-1} \in \mathbb{F}_q} \alpha_{m-1}^{a_{m-1}+b_{m-1}}\right)\left(\sum_{\alpha_m \in \mathbb{F}_q} \alpha_m^{a_m+b_m}\right)\\
    		+& \left(0^{a_0+b_0} \cdots 0^{a_{m-3}+b_{m-3}}\right) 1^{a_{m-2}+b_{m-2}}\left(\sum_{\alpha_{m-1} \in \mathbb{F}_q} \alpha_{m-1}^{a_{m-1}+b_{m-1}}\right)\left(\sum_{\alpha_m \in \mathbb{F}_q} \alpha_m^{a_m+b_m}\right)\end{split}		
    \end{equation}
    If $a_{m-2}+b_{m-2}\equiv 0 \mod(q-1)$,then $\sum_{p \in \mathcal{P}^{\prime}_m}\left(m_i m_j\right)(p)=0$.
    \\Hence $a_{m-2}+b_{m-2}\ne q-1,2(q-1),3(q-1)$ and
    $$
    	\sum_{p \in \mathcal{P}^{\prime}_m}\left(m_i m_j\right)(p)=\left(0^{a_0+b_0} \cdots 0^{a_{m-3}+b_{m-3}}\right) 1^{a_{m-2}+b_{m-2}}\left(\sum_{\alpha_{m-1} \in \mathbb{F}_q} \alpha_{m-1}^{a_{m-1}+b_{m-1}}\right)\left(\sum_{\alpha_m \in \mathbb{F}_q} \alpha_m^{a_m+b_m}\right)
    $$
    In order to make it choose nonzero value 1,we must have\\
    $$\begin{cases}
    	a_{m-2}+b_{m-2}\ne q-1,2(q-1),3(q-1)\\
    	a_{m-1}+b_{m-1}\equiv 0 \mod (q-1)\\
    	a_{m}+b_{m}\equiv 0 \mod (q-1)
    \end{cases}$$
    We can state it in the following three cases:\\
    \textbf{(a)}
    $$\begin{cases}
    	0<a_{m-2}+b_{m-2}< (q-1),(q-1)<a_{m-2}+b_{m-2}< 2(q-1)\\
    	a_{m-1}+b_{m-1}=q-1\\
    	a_{m}+b_{m}=q-1
    \end{cases}$$
    There exists $\frac{q(3q-1)}{2}$ pairs corresponding $m_i$ and $m_j$ 
    $$
    \begin{cases}
    	m_i&=x_{m-2}^{v-a_{m-1}-{a_m}}x_{m-1}^{a_{m-1}}x_m^{a_m}\\
    	m_j&=x_{m-2}^{v-2(q-1)+a_{m-1}+a_m}x_{m-1}^{q-1-a_{m-1}}x_m^{q-1-a_m}
    \end{cases}
    $$
    \textbf{(b)}
    $$\begin{cases}
    	0<a_{m-2}+b_{m-2}< (q-1)\\
    	a_{m-1}+b_{m-1}=2(q-1)\\
    	a_{m}+b_{m}=q-1
    \end{cases}$$
    There exists $\frac{q(5q-3)}{2}$ pairs corresponding $m_i$ and $m_j$ 
    $$
    \begin{cases}
    	m_i&=x_{m-2}^{v-a_{m-1}-{a_m}}x_{m-1}^{a_{m-1}}x_m^{a_m}\\
    	m_j&=x_{m-2}^{v-3(q-1)+a_{m-1}+a_m}x_{m-1}^{2(q-1)-a_{m-1}}x_m^{q-1-a_m}
    \end{cases}
    $$
    \textbf{(c)}
    $$\begin{cases}
    	0<a_{m-2}+b_{m-2}< (q-1)\\
    	a_{m-1}+b_{m-1}=(q-1)\\
    	a_{m}+b_{m}=2(q-1)
    \end{cases}$$
    There exists $\frac{q(5q-3)}{2}$ pairs corresponding $m_i$ and $m_j$ 
    $$
    \begin{cases}
    	m_i&=x_{m-2}^{v-a_{m-1}-{a_m}}x_{m-1}^{a_{m-1}}x_m^{a_m}\\
    	m_j&=x_{m-2}^{v-3(q-1)+a_{m-1}+a_m}x_{m-1}^{q-1-a_{m-1}}x_m^{2(q-1)-a_m}
    \end{cases}
    $$
    From (a)(b)(c),we can conclude that there exists $\frac{q(13q-7)}{2}$ entries with value 1 in the matrix $G'(G')^T$.\\
    \textbf{Case 8}:If $a_{m-3}+b_{m-3}\ne 0,a_{m-2}+b_{m-2}\ne 0,a_{m-1}+b_{m-1}\ne 0$,i.e 
    $$
    \scriptstyle\sum_{p \in \mathcal{P}^{\prime}_m}\left(m_i m_j\right)(p)\\
    		=\left(0^{a_0+b_0} \cdots 0^{a_{m-4}+b_{m-4}}\right) 1^{a_{m-3}+b_{m-3}}\left(\sum_{\alpha_{m-2} \in \mathbb{F}_q} \alpha_{m-2}^{a_{m-2}+b_{m-2}}\right)\left(\sum_{\alpha_{m-1} \in \mathbb{F}_q} \alpha_{m-1}^{a_{m-1}+b_{m-1}}\right)\left(\sum_{\alpha_m \in \mathbb{F}_q} \alpha_m^{a_m+b_m}\right)\\
    $$
    Note that 3(q-1)<2v<4(q-1).If we want to the sumation is nonzero,we must have that 
    $$\begin{cases}
    	a_{m-3}+b_{m-3}=2v-3(q-1)\\
    	a_{m-2}+b_{m-2}=q-1\\
    	a_{m-1}+b_{m-1}=q-1\\
    	a_{m}+b_{m}=q-1
    \end{cases}$$
    ,then $\sum_{p \in \mathcal{P}^{\prime}_m}\left(m_i m_j\right)(p)=-1$.We can get the coresponding pairs $m_i$ and $m_j$ as follow: 
    $$
    \begin{cases}
    	m_i&=x_{m-3}^{v-a_{m-2}-a_{m-1}-a_m}x_{m-2}^{a_{m-2}}x_{m-1}^{a_{m-1}}x_m^{a_m}\\
    	m_j&=x_{m-3}^{v-3(q-1)+a_{m-2}+a_{m-1}+a_m}x_{m-2}^{(q-1)-a_{m-2}}x_{m-1}^{q-1-a_{m-1}}x_m^{(q-1)-a_m}
    \end{cases}
    $$
    We must have that
    $$
    \begin{cases}
    	a_{m-i} \le q-1,(i=0,1,2)\\
    	3(q-1)-v \le a_{m-2}+a_{m-1}+a_m \le v
    \end{cases}    
    $$
    Using\cite{Sor} Lemma 5 ,there exists $\sum_{i=3(q-1)-v}^{v}\sum_{j=0}^{3}(-1)^j\binom{3}{j}\binom{i-qj+2}{i-qj}$ entries in $GG^T$ with value -1 in Case 8. 
    From the analysis of the values and positions of the entries in the matrix $G'(G')^T$,we find that only the third situation of -1 in Case 4 and the Case 8 have contributions to the rank of $G'(G')^T$.We claim that 
    $$
    Rank(G'(G')^T)=\left[\sum_{i=3(q-1)-v}^{v}\sum_{j=0}^{3}(-1)^j\binom{3}{j}\binom{i-qj+2}{i-qj}+[2v-3(q-1)+1]\right]
    $$ . We use the Lemma \ref{lem-parameters-2} and Proposition \ref{prop-hull} and get the Theorem \ref{th_k_to_2(q-1)}.
    
    By excluding the monomials in the third situation of \textbf{Case 4} and \textbf{Case 8}, a basis for $\operatorname{Hull}\left(PRM(q,m,v)\right)$ is given by $\{\operatorname{ev}(f)\}_{f \in \mathcal{M}}$ where $\mathcal{M}$ is the set of monomials of degree $v$ in $x_0, x_1, \ldots, x_m$ except for the $\left[\sum_{i=3(q-1)-v}^{v}\sum_{j=0}^{3}(-1)^j\binom{3}{j}\binom{i-qj+2}{i-qj}+[2v-3(q-1)+1]\right]$ polynomials:
    $$
    \begin{cases}
    	m_i&=x_{m-1}^{v-a_m}x_m^{a_m},(3(q-1)-v \le a_m \le v)\\
    	m_i&=x_{m-3}^{v-a_{m-2}-a_{m-1}-a_m}x_{m-2}^{a_{m-2}}x_{m-1}^{a_{m-1}}x_m^{a_m},\begin{cases}
    		a_{m-\mu} \le q-1,(\mu=0,1,2)\\
    		3(q-1)-v \le a_{m-2}+a_{m-1}+a_m \le v
    	\end{cases}
    	
    \end{cases}.
    $$
\end{proof}
    \begin{example}
    We examine the codes $Hull(PRM(8,4,13))$ by magma:	
    	\begin{lstlisting}
    		%calculate the dimension of Hull(PRM(8,4,13)) directly
    		q:=8;m:=4;v:=13;
    		F := GF(q);
    		P3 := ProjectiveSpace(F, m);
    		points := Points(P3);
    		R<x_1, x_2, x_3,x_4,x_5> := PolynomialRing(F, m+1);
    		polynomials := [R!m : m in SetToSequence(MonomialsOfDegree(R, v)) ];
    		codewords := [];
    		for poly in polynomials do
    		codeword := [];
    		for pt in points do
    		
    		Append(~codeword, Evaluate(poly, Eltseq(pt)));
    		end for;
    		Append(~codewords, Vector(F, codeword));
    		end for;
    		G := Matrix(codewords);
    		linear_code := LinearCode(G);
    		
    		v_mod_q:=v mod (q-1);
    		s_set:={s_1:s_1 in [v_mod_q..v by (q-1)]};
    		dim_prm_qnk:=&+[&+[ (-1)^j*Binomial(m+1, j)*Binomial(t-q*j+m, t-j*q) : j in [0..m+1] ]:t in s_set];
    		dim_prm_qnk-Rank(G*Transpose(G));
    		1682
    		
    		%calculate the dimension of Hull(PRM(8,4,13)) by our formula
    		i_low:=3*(q-1)-v;
    		i_up:=v; 
    		sum_coefficients:=&+[&+[(-1)^j*Binomial(3,j)*Binomial(i-j*q+2,i-j*q) : j in [0..3] ]:i in [i_low..i_up]];
    		dim_prm_qnk-(sum_coefficients+2*v-3*(q-1)+1);
    		1682
    	\end{lstlisting}
    \end{example}
Similar to corollary \ref{cor-k_to_3/2(q-1)} we can get corollary \ref{cor-k_to_2(q-1)}.
\begin{cor}\label{cor-k_to_2(q-1)}
	Suppose that $ m\ge 4, (m-2)(q-1)< v < m(q-1)- \frac{3(q-1)}{2}$ and let $\ell = m(q-1) - v$. Then 
	$$
	\scriptstyle\operatorname{dim}\left(\operatorname{Hull}\left(PRM(q,m,v)\right)\right)=\operatorname{dim}\left(\operatorname{Hull}\left(PRM(q,m,\ell)\right)\right)=k_2-\left[\sum_{i=3(q-1)-\ell}^{\ell}\sum_{j=0}^{3}(-1)^j\binom{3}{j}\binom{i-qj+2}{i-qj}+[2\ell-3(q-1)+1]\right]; .
	$$
	where $k_2=\operatorname{dim}\left(PRM(q,m,\ell)\right)=\sum_{\substack{t=\ell(\bmod q-1) \\
			0<t \leqslant \ell}}\left(\sum_{j=0}^{m+1}(-1)^j\binom{m+1}{j}\binom{t-j q+m}{t-j q}\right).$\\
\end{cor}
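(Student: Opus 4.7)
The plan is to reduce this corollary to Theorem \ref{th_k_to_2(q-1)} via the duality isomorphism of Theorem \ref{thm-Sor}, in exactly the same spirit as the proof of Corollary \ref{cor-k_to_3/2(q-1)}.

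First, I would translate the range. Setting $\ell = m(q-1) - v$, the hypothesis $(m-2)(q-1) < v < m(q-1) - \frac{3(q-1)}{2}$ rearranges to $\frac{3(q-1)}{2} < \ell < 2(q-1)$, which is exactly the range where Theorem \ref{th_k_to_2(q-1)} applies. I also need to confirm that $v \not\equiv 0 \pmod{q-1}$; the nearest multiples of $q-1$ are $(m-2)(q-1)$ (excluded by the strict lower bound) and $(m-1)(q-1)$, but the latter satisfies $(m-1)(q-1) > m(q-1) - \frac{3(q-1)}{2} = (m-1)(q-1) - \frac{q-1}{2}$, so it lies above the upper bound. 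Hence no multiple of $q-1$ lies in the open interval, and equivalently $\ell \not\equiv 0 \pmod{q-1}$ as well.

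Next, by Theorem \ref{thm-Sor}(i) applied to $v$, we have $PRM(q,m,v)^\perp = PRM(q,m,\ell)$, and symmetrically $PRM(q,m,\ell)^\perp = PRM(q,m,v)$. Therefore
$$\Hull(PRM(q,m,v)) = PRM(q,m,v) \cap PRM(q,m,\ell) = \Hull(PRM(q,m,\ell)),$$
so the dimensions of the two hulls coincide.

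Finally, applying Theorem \ref{th_k_to_2(q-1)} directly with $\ell$ in place of $v$, together with Lemma \ref{lem-parameters-2} to identify $k_2 = \dim(PRM(q,m,\ell))$, yields the desired formula. The entire argument is a bookkeeping exercise, and the only point requiring care is verifying that the translated range for $\ell$ sits properly inside $(\frac{3(q-1)}{2}, 2(q-1))$ and avoids multiples of $q-1$; once this is checked the result is immediate. No essential difficulty arises beyond this range-checking, since all the combinatorial work has already been done in the proof of Theorem \ref{th_k_to_2(q-1)}.
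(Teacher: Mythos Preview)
Your proposal is correct and follows exactly the paper's approach: the paper simply states that the corollary is obtained ``similar to corollary \ref{cor-k_to_3/2(q-1)}'', whose proof invokes Theorem \ref{thm-Sor}(i) to identify $PRM(q,m,v)^\perp = PRM(q,m,\ell)$, checks that $\ell$ lands in the correct range, and then applies the preceding theorem. You have in fact supplied more detail than the paper does, in particular the explicit verification that $v \not\equiv 0 \pmod{q-1}$, which the paper leaves implicit.
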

 \begin{example}
 	We use magma to examine $Hull(PRM(7,4,13))$,the original magma codes is followed.
 	\begin{lstlisting}
 		%calculate the dimension of Hull(PRM(7,4,13))
 		q:=7;m:=4;v:=13;
 		F := GF(q);
 		P3 := ProjectiveSpace(F, m);
 		points := Points(P3);
 		R<x_1, x_2, x_3,x_4,x_5> := PolynomialRing(F, m+1);
 		polynomials := [R!m : m in SetToSequence(MonomialsOfDegree(R, v)) ];
 		codewords := [];
 		for poly in polynomials do
 		codeword := [];
 		for pt in points do
 		
 		Append(~codeword, Evaluate(poly, Eltseq(pt)));
 		end for;
 		Append(~codewords, Vector(F, codeword));
 		end for;
 		G := Matrix(codewords);
 		linear_code := LinearCode(G);
 		v_mod_q:=v mod (q-1);
 		s_set:={s_1:s_1 in [v_mod_q..v by (q-1)]};
 		dim_prm_qnk:=&+[&+[ (-1)^j*Binomial(m+1, j)*Binomial(t-q*j+m, t-j*q) : j in [0..m+1] ]:t in s_set];
 		dim_prm_qnk-Rank(G*Transpose(G));
 		961
 		
 		%calculate the dimension of Hull(PRM(7,4,13)) by our formula
 		l:=m*(q-1)-v;
 		i_low:=3*(q-1)-l;
 		i_up:=l; 
 		sum_coefficients := &+[&+[ (-1)^j*Binomial(3, j)*Binomial(i-j*q+2, i-j*q) : j in [0..3] ]:i in [i_low..i_up]]+(2*l-3*(q-1)+1);
 		
 		l_mod_q:=l mod (q-1);
 		t_set:={t_1:t_1 in [l_mod_q..l by (q-1)]};
 		dim_prm_qml:=&+[&+[ (-1)^j*Binomial(m+1, j)*Binomial(t-q*j+m, t-j*q) : j in [0..m+1] ]:t in t_set];
 		dim_prm_qml-sum_coefficients;
 		961
 		
 		
 		
 		
 		
 	\end{lstlisting}
 	
 \end{example}   
Using the same approach, though the dimension $m$ of projective space is very large ($m \ge 5$ or more bigger), we can determine the dimension of $Hull(PRM(q,m,v))$ when the range of $v$  gradually increases by increments of $\frac{q-1}{2}$ from $2(q-1)$ or decrease by increments of $\frac{q-1}{2}$ from $(m-2)(q-1)$. However, the analysis of the matrix $G'(G')^T$ may be too complex and hard when $v$ become close to the middle $\frac{m(q-1)}{2}$. So we guess that there doesn't exist a general formula for the dimension of the hulls of PRM codes, or can be studied by other way.  
	\section{The minimum distance of the Hull of $PRM$ code}\label{minihulldistance}
	In this section, we use $wt(PRM(q,m,v))$ denote the minimal distance of a PRM code, or a linear code similiarily.
	\begin{lemma}
		For fixed m and q, the minimal distance of a family of $PRM$ codes $PRM(q,m,v)$ decreases with v increases.
	\end{lemma}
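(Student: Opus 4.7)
The plan is to appeal directly to Sørensen's closed-form expression for the minimum distance (Lemma \ref{lem-parameters-2}) and verify that the function $v \mapsto d(PRM(q,m,v))$ is non-increasing on $\{1,\dots,m(q-1)\}$, with an immediate tail beyond that range.

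Write $v-1 = r(q-1)+s$ with $0\le s<q-1$, so that $d(v)=(q-s)q^{m-r-1}$. Comparing $d(v)$ and $d(v+1)$ splits naturally into two cases according to whether incrementing $v$ keeps us in the same residue class modulo $q-1$ or pushes us into the next one. In the first case ($s<q-2$) the pair updates to $(r,s+1)$, so the distance drops by exactly $q^{m-r-1}$. In the second case ($s=q-2$) the pair updates to $(r+1,0)$, and one computes $d(v)=2q^{m-r-1}$ while $d(v+1)=q\cdot q^{m-r-2}=q^{m-r-1}$, a halving. In both cases $d(v+1)<d(v)$, which yields the stated monotonicity within the Sørensen range.

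For $v\ge m(q-1)+1$ the code is the whole space $\mathbb F_q^n$ with minimum distance $1$ (see the remark just after Lemma \ref{lem-parameters}), which agrees with the endpoint value $d(m(q-1))=2$ and hence does not break the chain of inequalities; one only needs to check that the final drop from $v=m(q-1)$ to $v=m(q-1)+1$ decreases the distance from $2$ to $1$, which is Case (b) of the analysis above specialized at $r=m-1$.

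The argument is essentially a two-line calculation once Sørensen's formula is in hand, so there is no real obstacle; the only thing to watch is the boundary transitions between consecutive residue classes mod $q-1$, where the drop is of a different nature (a multiplicative factor of two rather than an additive step), and the passage into the trivial regime $v>m(q-1)$, which must be accounted for separately. Note that a containment-based argument using Lemma \ref{PRMcontain} is not sufficient on its own, since that lemma only compares codes with $v_1\equiv v_2\pmod{q-1}$; the transitions between residue classes are precisely where such a shortcut would fail and where the explicit distance formula is needed.
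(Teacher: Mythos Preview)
Your argument is correct and follows the same approach the paper indicates: invoke S\o{}rensen's distance formula $d=(q-s)q^{m-r-1}$ and read off the monotonicity. The paper in fact omits the details entirely, saying only that the claim is easy from this formula; you have supplied the case split at $s=q-2$ (and the tail beyond $v=m(q-1)$) that the authors left implicit.
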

    \begin{proof}
    	This lemma is easy to prove because the minimal distance is $d=(q-s)q^{m-r-1}$, where $v-1=r(q-1)+s,~~0 \le s \le q-1$, so we omit it.
    \end{proof}
\begin{lemma}\label{123subspace}
	For three linear subspace $W_1$, $W_2$, $W_3 \subset \F_q^n$, if  $W_2 \subset W_1$, then 
	$$
	W_1 \cap {{\Spa}}_{\F_q}\{W_3,W_2\} = (W_1 \cap W_3) + (W_1 \cap W_2).
	$$
	In particular, if $W_1 \cap W_3=\{0\}$ , $W_2 \cap W_3=\{0\}$, then 
	$$
	W_1 \cap {{\Spa}}_{\F_q}\{{W_3,W_2}\}=W_1 \cap (W_3 \oplus W_2)=W_1 \cap W_2=W_2.
	$$
\end{lemma}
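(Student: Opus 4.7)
The plan is to recognize this as essentially the Dedekind modular law for subspaces, applied to the pair $W_2 \subseteq W_1$ together with an arbitrary third subspace $W_3$. The key observation is that $\Spa_{\F_q}\{W_3, W_2\}$ is just the subspace sum $W_2 + W_3$, so the identity to prove reads $W_1 \cap (W_2 + W_3) = (W_1 \cap W_2) + (W_1 \cap W_3)$ under the hypothesis $W_2 \subseteq W_1$.

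For the inclusion $\supseteq$, I would note that both $W_1 \cap W_2$ and $W_1 \cap W_3$ sit inside $W_1$ and inside $W_2 + W_3$, so their sum lies in $W_1 \cap (W_2+W_3)$. For the inclusion $\subseteq$, take any $x \in W_1 \cap (W_2 + W_3)$ and write $x = w_2 + w_3$ with $w_2 \in W_2$ and $w_3 \in W_3$. The hypothesis $W_2 \subseteq W_1$ gives $w_2 \in W_1$, and then $w_3 = x - w_2 \in W_1$ as well. Hence $w_2 \in W_1 \cap W_2$ and $w_3 \in W_1 \cap W_3$, which yields $x \in (W_1 \cap W_2) + (W_1 \cap W_3)$. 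This is the only step where the hypothesis $W_2 \subseteq W_1$ is actually used, and it is what forces the modular law to work (it fails for arbitrary triples of subspaces).

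For the \emph{in particular} clause, the condition $W_2 \cap W_3 = \{0\}$ makes the sum $W_2 + W_3$ internal direct, so $\Spa_{\F_q}\{W_3,W_2\} = W_3 \oplus W_2$; then $W_1 \cap W_3 = \{0\}$ collapses the general formula to $(W_1 \cap W_2) + \{0\} = W_1 \cap W_2$, and $W_2 \subseteq W_1$ gives $W_1 \cap W_2 = W_2$. (I would also remark that under $W_2 \subseteq W_1$, the hypothesis $W_2 \cap W_3 = \{0\}$ is in fact implied by $W_1 \cap W_3 = \{0\}$, but keeping both in the statement makes the conclusion more transparent.) There is no real obstacle here; the whole argument is a short direct-sum manipulation, and the only thing to be careful about is invoking $W_2 \subseteq W_1$ at the right moment to pull $w_2$ (and hence $w_3$) back into $W_1$.
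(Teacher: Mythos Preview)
Your proof is correct and follows the same double-inclusion argument as the paper: show $\supseteq$ trivially, then for $\subseteq$ decompose $x=w_2+w_3$ and use $W_2\subseteq W_1$ to force $w_3\in W_1$. If anything, your write-up is cleaner, since you keep the general modular-law step ($w_3\in W_1\cap W_3$, $w_2\in W_1\cap W_2$) separate from the particular case where $W_1\cap W_3=\{0\}$, whereas the paper's argument for $\subseteq$ actually invokes $W_1\cap W_3=\{0\}$ already in the general step; your remark that $W_2\cap W_3=\{0\}$ is redundant under the other hypotheses is also a nice addition.
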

\begin{proof}
	Generally speaking, for an arbitrary subspace $W_3$, 
	$$W_1 \cap {{\Spa}}_{\F_q}\{W_3,W_2\} \supseteq (W_1 \cap W_3) + (W_1 \cap W_2).$$
	Because for arbitrary $\alpha \in (W_1 \cap W_3) + (W_1 \cap W_2)$, $\alpha =\alpha_{1}+\alpha_{2}, \alpha_{1} \in (W_1 \cap W_3), \alpha_{2} \in (W_1 \cap W_2)$. Then $\alpha \in W_1$ and $\alpha \in {{\Spa}}_{\F_q}\{W_3,W_2\}$. Hence $\alpha \in W_1 \cap {{\Spa}}_{\F_q}\{W_3,W_2\}.$\\
	To prove 
	$$W_1 \cap {{\Spa}}_{\F_q}\{W_3,W_2\} \subseteq (W_1 \cap W_3) + (W_1 \cap W_2),$$
	we can choose an arbitrary element $\alpha$ in $W_1 \cap {{\Spa}}_{\F_q}\{W_3,W_2\}$ and write in the form $\alpha=\beta + \gamma$, where $\beta \in W_3$ and $\gamma \in W_2 \subseteq W_1$. Hence $\beta=\alpha -\gamma \in W_1$. We get that $\beta \in W_1 \cap W_3 = \{\mathbf{0}\}$ and $\beta=0$. Therefore $\alpha \in W_1 \cap W_2 \subseteq (W_1 \cap W_3) + (W_1 \cap W_2).$
	\end{proof}
     
	\begin{theorem} \label{hulldistance}
		Let $v$ be an integer satisfying $1 \le v \le m(q-1)$ and let $\ell =m(q-1)-v$. Then the minimum distance of $wt(PRM(q,m,v))$ satisfies that
		$$
		wt\left(\operatorname{Hull}\left(PRM(q,m,v)\right)\right)= max\{wt(PRM(q,m,v)),wt(PRM(q,m,\ell))\},
		$$
		i.e
		\begin{equation}
			wt\left(\operatorname{Hull}\left(PRM(q,m,v)\right)\right)=
			\begin{cases}
				wt(PRM(q,m,v)) & \text{if $v \le \ell$}\\
				wt(PRM(q,m,\ell)) & \text{if $v \ge \ell.$}
			\end{cases} 
		\end{equation}
	\end{theorem}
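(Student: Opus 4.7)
The inclusion $\Hull(PRM(q,m,v))\subseteq PRM(q,m,v)$ immediately gives $wt(\Hull)\ge wt(PRM(q,m,v))$, and applying the same containment to the dual together with Theorem \ref{thm-Sor} gives $wt(\Hull)\ge wt(PRM(q,m,\ell))$. Hence the lower bound $wt(\Hull)\ge\max\{wt(PRM(q,m,v)),wt(PRM(q,m,\ell))\}$ is free, and only the matching upper bound needs work. Using $\Hull(C)=\Hull(C^\perp)$ when $v\not\equiv 0\pmod{q-1}$, and computing $\Hull$ directly via Lemmas \ref{123subspace} and \ref{1notinPRM} when $v\equiv 0\pmod{q-1}$ (which in that regime identifies $\Hull=PRM(q,m,\min(v,\ell))$ outright), it suffices to treat $v\le\ell$ and exhibit a codeword of weight $wt(PRM(q,m,v))$ in $\Hull$.

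The argument splits according to the residue of $v$ modulo $q-1$. When $v\equiv 0\pmod{q-1}$, Lemma \ref{PRMcontain} applied to $v\le\ell$ together with Theorem \ref{thm-Sor}(ii) forces $PRM(q,m,v)\subseteq PRM(q,m,\ell)\subseteq\Spa_{\F_q}\{\mathbf{1},PRM(q,m,\ell)\}=PRM(q,m,v)^\perp$, so $PRM(q,m,v)$ is self-orthogonal and $\Hull=PRM(q,m,v)$. When $v\not\equiv 0$ but $2v\equiv 0\pmod{q-1}$, the same strategy using Theorem \ref{thm-Sor}(i) gives $PRM(q,m,v)\subseteq PRM(q,m,\ell)=PRM(q,m,v)^\perp$ and again $\Hull=PRM(q,m,v)$. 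In both cases every minimum weight codeword of $PRM(q,m,v)$ lies in $\Hull$ and the claim is immediate.

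The remaining case, $2v\not\equiv 0\pmod{q-1}$, is where by Kaplan and Kim $\Hull$ is a proper subspace of $PRM(q,m,v)$ and a genuine construction is needed. My plan is to start from the S\o{}rensen-type minimum weight codeword
\[
F=\prod_{i=0}^{r-1}\bigl(x_i^{q-1}-x_r^{q-1}\bigr)\cdot\prod_{j=1}^{s+1}\bigl(x_r-\beta_j x_{r+1}\bigr),\qquad v-1=r(q-1)+s,
\]
and to produce a homogeneous polynomial $H$ of degree $\ell-v$ that evaluates to $1$ on the support of $\ev(F)$; then $FH$ is homogeneous of degree $\ell$ with $\ev(FH)=\ev(F)$, placing $\ev(F)\in PRM(q,m,\ell)$ and hence in $\Hull$. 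The main obstacle is producing $H$: the support of $\ev(F)$ is a structured union of affine strata in $\PP^m(\F_q)$ of total size $(q-s)q^{m-r-1}$, and one must check, by a dimension count on the restricted evaluation map $\ev\colon\F_q[x_0,\ldots,x_m]_h^{\ell-v}\to\F_q^{\operatorname{supp}(\ev(F))}$, that the constant vector $\mathbf{1}$ lies in its image; I expect this step to be handled by a monomial-evaluation matrix analysis in the spirit of Section \ref{hulldim-q-1not-divide}.
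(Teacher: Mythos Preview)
Your overall architecture matches the paper's: the lower bound is immediate, you reduce to the case $v\le\ell$ by duality (and by Lemmas \ref{123subspace}, \ref{1notinPRM} in the $v\equiv 0$ case), and you then seek a minimum-weight codeword of $PRM(q,m,v)$ that also lies in $PRM(q,m,\ell)$.

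The gap is in your final step. In the case $2v\not\equiv 0\pmod{q-1}$ you propose to manufacture a homogeneous $H$ of degree $\ell-v$ that evaluates to $1$ on the support of $\ev(F)$, and you defer its construction to ``a dimension count on the restricted evaluation map'' in the style of Section \ref{hulldim-q-1not-divide}. This is left unproven, and such a matrix analysis would be substantially harder than anything in the paper. In fact no analysis is needed: the support of the S\o{}rensen minimum-weight codeword $F$ consists precisely of those projective points whose standard representative has $x_r=1$ (and the first $r$ coordinates equal to $0$). Hence $H=x_r^{\ell-v}$ already evaluates to $1$ on the entire support of $\ev(F)$, and $G=x_r^{\ell-v}F$ is a degree-$\ell$ homogeneous polynomial with $\ev(G)=\ev(F)$. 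This is exactly what the paper does, and it works uniformly for all $v\le\ell$ without any case split on the residue of $v$ modulo $q-1$; your separate treatment of the self-orthogonal cases is correct but unnecessary.

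One minor point: your displayed $F$ has $s+1$ linear factors and no leading $x_r$, which gives the wrong weight; the paper uses $F=x_r\prod_{i=0}^{r-1}(x_i^{q-1}-x_r^{q-1})\prod_{j=1}^{s}(\lambda_j x_r-x_{r+1})$. This does not affect the argument, since only the shape of the support matters.
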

\begin{proof}
	By Theorem 1 of \citeonline{Sor},a codeword reach the minimum distance of $PRM(q,m,v)$ corresponds to the following polynomial:
	$$
	F(X)=X_r\prod_{i=0}^{r-1}(X_i^{q-1}-X_r^{q-1})\prod_{j=1}^{s}(\lambda_j X_r-X_{r-1})
    $$
    where $\lambda_i \ne \lambda_j$ for $i \ne j$ and $\lambda_i \in \F_q^{*}$, has zeros at all points except those of the form $(0:0:\dots:1:a_{r+1}:\dots:a_m)$,where $a_r+1 \ne \lambda_j, j=1,\dots,s$ and $a_t \in \F_q^{*}$ for $t=r+2,\dots,m$. Suppose that $PRM(q,m,v)$ and $PRM(q,m,\ell)$ are two PRM codes satisfying the conditions in Theorem \ref{thm-Sor},i.e $v+\ell=m(q-1)$. 
    
    If $v \le l$, the minimum distance of $PRM(q,m,v)$ is bigger than $PRM(q,m,\ell)$, and we notice that we can construct a homogeneous polynomial of degree $l$ as following:
    $$
    G(X)=X_r^{l-k}F(X)=X_r^{l-k+1}\prod_{i=0}^{r-1}(X_i^{q-1}-X_r^{q-1})\prod_{j=1}^{s}(\lambda_j X_r-X_{r-1})
    $$
    Hence $G(X) \in \F_q[X_0,\dots,X_n]^h_l$ and corresponds to the same codeword as $F(X)$. The codeword corresponding to the polynomial  $G(X)$ belongs to $PRM(q,m,\ell)$ which is always contained in ${PRM(q,m,v)}^{\perp}$ for any $v$ satisfying conditions in Theorem \ref{thm-Sor}. Because $G(X)$ and $F(X)$ have the same value in each point of $\mathbb{P}^n(\F_q)$ , the two codewords corresonding to $G(X)$ and $F(X)$ both reach the minimum distance of $PRM(q,m,v)$, i.e they correspond to the same codeword in $PRM(q,m,v)$ which reach the minimal distance of $PRM(q,m,v)$ and the codeword belongs to $\operatorname{Hull}\left(PRM(q,m,v)\right)$. Because $wt(\operatorname{Hull}\left(PRM(q,m,v)\right))\ge wt(PRM(q,m,v))$ by definition of hull. Hence $wt\left(\operatorname{Hull}\left(PRM(q,m,v)\right)\right)=wt\left(PRM(q,m,v)\right)$ when $v \le \ell$. 
    
    If $v \ge \ell$ and $v \not\equiv 0 \pmod{q-1}$ , then $wt\left(\operatorname{Hull}\left(PRM(q,m,v)\right)\right)=wt\left(\operatorname{Hull}\left(PRM(q,m,\ell)\right)\right)=wt\left(PRM(q,m,\ell)\right)$ by theorem \ref{thm-Sor} and the above situation. 
    
    If $v \ge \ell$ and $v \equiv 0 \pmod{q-1}$ , then $\ell \equiv 0 \pmod{q-1}$ and the dual code of $PRM(q,m,v)$ is ${{\Spa}}_{\F_q}  \{ {\bf 1}, PRM(q,m,\ell) \}$ by theorem \ref{thm-Sor}. Hence $\operatorname{Hull}\left(PRM(q,m,v)\right)=\left(PRM(q,m,v)\right) \cap {{\Spa}}_{\F_q}  \{ {\bf 1}, PRM(q,m,\ell) \}$. By lemma \ref{1notinPRM}, $\mathbf{1} \not \in PRM(q,m,v)$ and $\mathbf{1} \not \in PRM(q,m,\ell)$. Hence $PRM(q,m,v) \cap {{\Spa}}_{\F_q}\{ {\bf 1}\}=\{0\}$ , $PRM(q,m,\ell) \cap {{\Spa}}_{\F_q}  \{ {\bf 1}\}=\{0\}$. By lemma \ref{PRMcontain},$PRM(q,m,\ell) \subset PRM(q,m,v).$ In lemma \ref{123subspace}, let $W_1=PRM(q,m,v), W_2=PRM(q,m,\ell), W_3={{\Spa}}_{\F_q}  \{ {\bf 1}\}$, then 
    $$
    \operatorname{Hull}\left(PRM(q,m,v)\right)= PRM(q,m,\ell) 
    $$ 
    by the particular situation. The minimum distance of $PRM(q,m,v)$ is the minimal distance of $PRM(q,m,\ell)$ i.e $wt\left(\operatorname{Hull}\left(PRM(q,m,v)\right)\right)=wt\left(PRM(q,m,\ell)\right)$.
   
\end{proof}
.
\begin{example}
	We use magma to examine the minimal distance of $Hull(PRM(5,3,3))$ and $Hull(PRM(5,3,9))$, and the original codes as follow:
	\begin{lstlisting}
		%calculate the minimal distance of Hull(PRM(5,3,3))
		q:=5;m:=3;v:=3;
		F := GF(q);
		P3 := ProjectiveSpace(F, m);
		points := Points(P3);
		R<x_1, x_2,x_3,x_4> := PolynomialRing(F, m+1);
		polynomials := [R!m : m in SetToSequence(MonomialsOfDegree(R, v)) ];
		codewords := [];
		for poly in polynomials do
		codeword := [];
		for pt in points do
		
		Append(~codeword, Evaluate(poly, Eltseq(pt)));
		end for;
		Append(~codewords, Vector(F, codeword));
		end for;
		G := Matrix(codewords);
		C_1 := LinearCode(G);
		C_2:=Dual(C_1);
		MinimumWeight(C_1 meet C_2);
		75
		%calculate the minimal distance of PRM(5,3,3) by formula in lemma 2.2
		r:=(v-1) div  (q-1);s:=(v-1) mod (q-1);d:=(q-s)*q^(m-r-1);d;
		75
		%calculate the minimal distance of Hull(PRM(5,3,9))
		q:=5;m:=3;v:=9;
		F := GF(q);
		P3 := ProjectiveSpace(F, m);
		points := Points(P3);
		R<x_1, x_2,x_3,x_4> := PolynomialRing(F, m+1);
		polynomials := [R!m : m in SetToSequence(MonomialsOfDegree(R, v)) ];
		codewords := [];
		for poly in polynomials do
		codeword := [];
		for pt in points do
		
		Append(~codeword, Evaluate(poly, Eltseq(pt)));
		end for;
		Append(~codewords, Vector(F, codeword));
		end for;
		G := Matrix(codewords);
		C_1 := LinearCode(G);
		C_2:=Dual(C_1);
		MinimumWeight(C_1 meet C_2);
		75
	\end{lstlisting}
\end{example}

By the above theorem \ref{hulldistance}, for given $q$ and $m$, we can give a low bound for the minimal distance of a PRM code $PRM(q,m,v)$ and derive the following corrollary:

\begin{cor}
	For given $q$ and $m$, let $1 \le v \le \frac{m(q-1)}{2}$, we have that
	$$
	wt(\operatorname{Hull}\left(PRM(q,m,v)\right)) \ge wt(PRM(q,m,\lfloor \frac{m(q-1)}{2} \rfloor))
	$$
	or equivalently,
	$$
	wt(\operatorname{Hull}\left(PRM(q,m,v)\right)) \ge wt(PRM(q,m,\lceil \frac{m(q-1)}{2} \rceil)).
	$$
\end{cor}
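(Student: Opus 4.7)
The plan is to combine Theorem~\ref{hulldistance} with the first lemma of Section~\ref{minihulldistance} (the statement that for fixed $q,m$, the minimum distance $wt(PRM(q,m,v))$ is non-increasing as $v$ grows). Essentially the corollary is a two-line consequence of those two results, so the task is mainly to verify that the hypotheses line up.

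First I would translate the hypothesis $1\le v\le \frac{m(q-1)}{2}$ into the form needed to invoke Theorem~\ref{hulldistance}. Setting $\ell=m(q-1)-v$, one has
\[
\ell \;=\; m(q-1)-v \;\ge\; m(q-1)-\tfrac{m(q-1)}{2} \;=\; \tfrac{m(q-1)}{2} \;\ge\; v,
\]
so we are in the case $v\le \ell$ of Theorem~\ref{hulldistance}. That theorem then gives
\[
wt\bigl(\operatorname{Hull}(PRM(q,m,v))\bigr) \;=\; wt\bigl(PRM(q,m,v)\bigr).
\]

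Second, since $v\le \lfloor m(q-1)/2\rfloor$, the monotonicity lemma yields
\[
wt\bigl(PRM(q,m,v)\bigr) \;\ge\; wt\!\left(PRM\!\left(q,m,\lfloor m(q-1)/2\rfloor\right)\right),
\]
and chaining this with the previous equality delivers the first displayed inequality of the corollary. For the ceiling reformulation, I would note that $\lceil m(q-1)/2\rceil$ equals $\lfloor m(q-1)/2\rfloor$ when $m(q-1)$ is even and exceeds it by $1$ otherwise; applying the monotonicity lemma once more gives $wt(PRM(q,m,\lceil m(q-1)/2\rceil))\le wt(PRM(q,m,\lfloor m(q-1)/2\rfloor))$, so the ceiling bound is a (possibly weaker) consequence of the floor bound, with equality whenever $m(q-1)$ is even.

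There is essentially no hard step here; the argument is a direct chain of two previously established results. The only subtle point worth flagging is to confirm that the hypothesis of Theorem~\ref{hulldistance} is the one we need (i.e.\ $v\le \ell$, not $v\ge \ell$), which is immediate from $v\le m(q-1)/2$. If one wished to strengthen the conclusion to an equality rather than an inequality, that would require an additional argument — but the corollary as stated only asks for the lower bound and follows at once.
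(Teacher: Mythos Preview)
Your proposal is correct and matches the paper's intended argument: the paper does not give a detailed proof of this corollary but simply states that it follows from Theorem~\ref{hulldistance}, and your chain (apply the $v\le\ell$ case of Theorem~\ref{hulldistance}, then invoke the monotonicity lemma at the start of Section~\ref{minihulldistance}) is exactly the natural way to fill in those details. Your remark that the ceiling bound is a (possibly weaker) consequence of the floor bound via monotonicity is also the right way to handle the ``or equivalently'' clause.
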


In remark \ref{special-case}, we find a special case in which $v \ge \frac{m(q-1)}{2},v \equiv 0 \pmod{q-1}$ i.e $v$ is an even multiple of $\frac{q-1}{2}$, but not any type of Self-dual, Self-orthgonal, LCD or Dual-containing. From the above theorem \ref{hulldistance}, we analyse this case and get that$$
\operatorname{Hull}\left(PRM(q,m,v)\right)= PRM(q,m,\ell) 
$$
Hence we have the following corrollary.
\begin{cor}
	If $v \ge \frac{m(q-1)}{2},v \equiv 0 \pmod{q-1},~~\ell=m(q-1)-v$, the dimension of $Hull(PRM(q,m,v))$ satisfies:
	$$
	\dim(Hull(PRM(q,m,v)))=\dim(PRM(q,m,\ell))=\sum_{\substack{t \equiv \ell\hspace{-.22cm} \pmod{q-1} \\ 0 < t \le \ell}} \left(
	\sum_{j=0}^{m+1} (-1)^j \binom{m+1}{j} \binom{t-jq+m}{t-jq} \right)
	$$
\end{cor}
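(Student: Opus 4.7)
The plan is to first establish the set-theoretic identity $\Hull(PRM(q,m,v)) = PRM(q,m,\ell)$ under the stated hypotheses, and then read off the dimension from Sørensen's formula (Lemma \ref{lem-parameters-2}) applied at parameter $\ell$. The key observation is that this identity has already been extracted as an auxiliary step inside the proof of Theorem \ref{hulldistance}; the corollary simply packages its dimensional content.

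First I would recall that the hypothesis $v \equiv 0 \pmod{q-1}$ puts us in case (ii) of Theorem \ref{thm-Sor}, giving $PRM(q,m,v)^\perp = \Spa_{\F_q}\{\mathbf{1}, PRM(q,m,\ell)\}$, so by definition
$$\Hull(PRM(q,m,v)) = PRM(q,m,v) \cap \Spa_{\F_q}\{\mathbf{1}, PRM(q,m,\ell)\}.$$
Since $v \ge \ell$ and $v \equiv \ell \equiv 0 \pmod{q-1}$, Lemma \ref{PRMcontain} yields $PRM(q,m,\ell) \subseteq PRM(q,m,v)$. Next, Lemma \ref{1notinPRM} guarantees $\mathbf{1} \notin PRM(q,m,v)$ and $\mathbf{1} \notin PRM(q,m,\ell)$, hence the intersections $PRM(q,m,v) \cap \Spa_{\F_q}\{\mathbf{1}\}$ and $PRM(q,m,\ell) \cap \Spa_{\F_q}\{\mathbf{1}\}$ are both trivial. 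Lemma \ref{123subspace} then applies with $W_1 = PRM(q,m,v)$, $W_2 = PRM(q,m,\ell)$, $W_3 = \Spa_{\F_q}\{\mathbf{1}\}$ to give $\Hull(PRM(q,m,v)) = PRM(q,m,\ell)$.

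With the identity in hand, the dimension count is immediate: taking dimensions of both sides and invoking Lemma \ref{lem-parameters-2} at parameter $\ell$ produces the displayed alternating sum over $t \equiv \ell \pmod{q-1}$, $0 < t \le \ell$. There is essentially no obstacle beyond bookkeeping; the only subtlety I would double-check is that $\ell$ lies in the range $1 \le \ell \le m(q-1)$ on which Sørensen's formula is valid. This is automatic: $v \ge m(q-1)/2$ forces $\ell \le m(q-1)/2 \le m(q-1)$, while the fact that $v$ is an even (rather than maximal) multiple of $(q-1)/2$ in Remark \ref{special-case} excludes the boundary case $v = m(q-1)$ (which is the LCD regime treated separately), so $\ell \ge 1$ holds.
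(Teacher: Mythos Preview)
Your proposal is correct and follows essentially the same route as the paper: the identity $\Hull(PRM(q,m,v)) = PRM(q,m,\ell)$ is established inside the proof of Theorem \ref{hulldistance} via exactly the chain Theorem \ref{thm-Sor}(ii) $\to$ Lemma \ref{PRMcontain} $\to$ Lemma \ref{1notinPRM} $\to$ Lemma \ref{123subspace}, and the corollary then just reads off the dimension from Lemma \ref{lem-parameters-2}. Your explicit check that $\ell$ lies in Sørensen's admissible range is a nice extra bit of care that the paper leaves implicit.
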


	\section{Conclusion}
	This paper completely determine the minimal distance for any Projective Reed-Muller Code determined by three parameters $q,m,v$ and analyses two special classes of Projective Reed-Muller Codes omitted by Kaplan and Kim. Although we determine the hull dimension of $PRM(q,m,v)$ when $q-1 < v < \frac{3(q-1)}{2}$ and $\frac{3(q-1)}{2} < v < 2(q-1)$, $(m-1)(q-1)-\frac{q-1}{2} < v < (m-1)(q-1)$and$(m-2)(q-1) < v < (m-1)(q-1)-\frac{q-1}{2}$ as the answer for the question proposed by Kaplan and Kim \cite{kaplankim}, it seems that determining the dimensions under a broader range of parameters is quite challenging, not to mention deriving a general formula. As the parameter range is further expanded, the corresponding matrices become very complex and difficult to analyze, so it may be worth exploring solutions from other perspectives.

\end{document}